\newtheorem{theorem}{Theorem}%  meant for continuous numbers
\newtheorem{proposition}[]{Proposition}% 
\newtheorem{definition}{Definition}%
\newtheorem{example}{Example}%
\begin{document}

\title{Statistic Selection and MCMC for Differentially Private Bayesian Estimation}

\author[]{Barış Alparslan}
\author[]{Sinan Yıldırım}

\affil[]{\normalsize Faculty of Engineering and Natural Sciences, Sabancı University, İstanbul, Turkey}
\affil[]{\small{\texttt{baris.alparslan@sabanciuniv.edu, sinanyildirim@sabanciuniv.edu}}}

%%==================================%%
%% sample for unstructured abstract %%
%%==================================%%

\maketitle

\begin{abstract}
This paper concerns differentially private Bayesian estimation of the parameters of a population distribution, when a statistic of a sample from that population is shared in noise to provide differential privacy. 

This work mainly addresses two problems: (1) What statistic of the sample should be shared privately? For the first question, i.e., the one about statistic selection, we promote using the Fisher information. We find out that, the statistic that is most informative in a non-privacy setting may not be the optimal choice under the privacy restrictions. We provide several examples to support that point. We consider several types of data sharing settings and propose several Monte Carlo-based numerical estimation methods for calculating the Fisher information for those settings. The second question concerns inference: (2) Based on the shared statistics, how could we perform effective Bayesian inference?  We propose several Markov chain Monte Carlo (MCMC) algorithms for sampling from the posterior distribution of the parameter given the noisy statistic. The proposed MCMC algorithms can be preferred over one another depending on the problem. For example, when the shared statistics is additive and added Gaussian noise, a simple Metropolis-Hasting algorithm that utilizes the central limit theorem is a decent choice. We propose more advanced MCMC algorithms for several other cases of practical relevance. 

Our numerical examples involve comparing several candidate statistics to be shared privately. For each statistic, we perform Bayesian estimation based on the posterior distribution conditional on the privatized version of that statistic. We demonstrate that, the relative performance of a statistic, in terms of the mean squared error of the Bayesian estimator based on the corresponding privatized statistic, is adequately predicted by the Fisher information of the privatized statistic.

{\footnotesize \textbf{Keywords:} Differential privacy; Markov chain Monte Carlo; Fisher Information; Statistic selection; Bayesian Statistics.}
\end{abstract}

\newpage

\section{Introduction} \label{intro}
In recent years, differential privacy  has become a popular framework for achieving privacy-preserving data sharing and inferential analysis of sensitive data sets \citep{Dwork_2006, Dwork_and_Roth_2013}.  In this paper, we are interested in differentially private Bayesian estimation for the parameters of a population distribution, when a statistic of a sample from that population is shared in noise so as to provide differential privacy. This work concerns two problems:
\begin{enumerate}
\item What statistic of the sample should be shared?
\item Based on the shared statistics, how could we make inference?
\end{enumerate}
For the first question, i.e., the one about statistic selection, we consider using the Fisher information. For the second question, we propose Markov chain Monte Carlo (MCMC) in general, to draw samples from the posterior distribution of the parameter given the noisy statistic.

Bayesian inference using MCMC has been recently studied in the data privacy context.  One of the first works concerning using Monte Carlo for differentially private posterior sampling is \citet{wang_et_al_2015}. In \citet{wang_et_al_2015}, a class of Stochastic Gradient MCMC techniques are adapted for differential privacy.  The scheme is later improved by a few works including \citet{Li_et_al_2019}.  A general purpose and scalable differentially private MCMC algorithm was proposed in \citet{Heikkila_et_al_2019}. Both \citet{wang_et_al_2015} and \citet{Heikkila_et_al_2019} lead to non-exact MCMC algorithms, in the sense that the target distribution is sampled from only asymptotically. \citet{Yildirim_and_Ermis_2019} developed an exact MCMC algorithm based on the penalty algorithm that targets the posterior distribution and provides differential privacy at the same time. The penalty algorithm is based on the classical Metropolis-Hastings algorithm; the adoption of the differentially private penalty algorithm in \citet{Yildirim_and_Ermis_2019} for Hamiltonian Monte Carlo is recently proposed in \citet{Raisa_et_al_2021}.

The above-mentioned works propose MCMC algorithms where in every iteration some function of the sensitive data is revealed in noise so that the iterations are made differentially private. Consider, for example, two parties, an analyst and a data-holder, where the analyst wishes to perform inference based on the sensitive data held secret by the data-holder. By their nature, the algorithms mentioned above require ongoing queries to the database. Although this can be available in some cases, it may not be practical in other situations due to the requirement of continuous interaction between the two parties as long as the course of the algorithm.
Instead, in such situations it may be more feasible for the data-holder to share the data in a private manner once and for all and for the analyst to perform inference based on the shared statistic without further interaction with the data-holder. In this paper we consider the latter case, i.e., the one where instead of the involved interaction between iterations, summaries of data are shared privately prior to statistical analysis.

\citet{foulds_et_al_2016} considered adding Gaussian noise to the statistics and showed the asymptotic properties of posterior distribution when the noisy statistics are used as if the true values. The differential privacy  of the generic Metropolis-Hastings algorithm is also analyzed in \citet{foulds_et_al_2016}. \citet{foulds_et_al_2016} then proposed Gibbs sampling for problems when the likelihood belongs to an exponential family. The restriction to exponential families can be indeed limiting. In theory, with advanced MCMC methodology, one can sample from the posterior distribution of a parameter when any informative statistic of the sensitive data is shared. Moreover, the method of \citet{foulds_et_al_2016} is only asymptotically biased as it does not account for the added noise to the sufficient statistics in its model.

Unlike \citet{foulds_et_al_2016}, the works of \citet{Williams_and_McSherry_2010, Karwa_et_al_2014, Bernstein_and_Sheldon_2018, Gong_2019} correctly accommodate the shared noisy statistic of the sensitive data into a hierarchical model which has the structure
\begin{equation*}
\text{parameter} \rightarrow \text{sensitive data} \rightarrow \text{noisy statistic}. 
\end{equation*}

The posterior distribution based on the noisy statistic has very strong resemblance to the already existing approximate Bayesian computation (ABC) literature. Although the hierarchical model above has been considered earlier, e.g. in \citet{Williams_and_McSherry_2010, Karwa_et_al_2014, Bernstein_and_Sheldon_2018}, the relation between differentially private statistics and approximate Bayesian computation, in particular noisy ABC, is pronounced for the first time in \citet{Gong_2019}. While \citet{Bernstein_and_Sheldon_2018} proposed Gibbs sampler for the hierarchical model and released samples from the posterior using  two stage updating process, a rejection sampler for the ABC posterior as well as an expectation-maximization (EM) algorithm is proposed in \citet{Gong_2019}. In a following work, \citet{Park_et_al_2021} developed an differentially private ABC method in  maximum mean discrepancy is used as the distance metric between artificial and observed data and the acceptance probability is randomized. 

In ABC, the artificial and real data are usually compared via a statistic. In this paper, we consider the following scenario. A statistic of the sensitive data is shared in a privacy preserving manner. Then, one samples from the noisy ABC posterior of the parameter of interest conditional on the noisy statistic. Nevertheless, the choice of the statistic is important for inference from finite data: In the case without privacy concerns, one would like to choose the statistic that is most informative about the parameter to be estimated \citep{fearnhead_and_prangle_2012}. When ABC is done in a DP context, however, the most informative statistic in the non-private setting is not necessarily the best choice. This is because the statistic is revealed in privacy preserving noise and the noise variance depends on the sensitivity of the statistic. In order to determine the best choice for the statistic to be shared, one must compare among the informativeness of the \emph{noisy} statistics. 

The question of scope and efficiency of statistical learning with differential privacy has been studied in the literature \citep{Kasiviswanathan_et_al_2008, Dwork_and_Lei_2009, Dwork_and_Smith_2010, Smith_2011, Lei_2011}. \citet{Kasiviswanathan_et_al_2008} demonstrated that, in most problems where the relation between examples to labels are learned, a private learner can learn what a non-private learner can in the same order of the number of samples. \citet{Smith_2011} established the existence of differentially private estimators with the same asymptotic variance as their non-private counterparts, also proposing such an estimator which can be seen as an improved version of those in \citet{Dwork_and_Lei_2009, Dwork_and_Smith_2010}. Both \citet{Smith_2011} and \citet{Dwork_and_Lei_2009, Dwork_and_Smith_2010} are based on the subsample and aggregate technique of \citet{Nissim_et_al_2007}. There also exist related works where robust statistics and M-estimators \citep{Lei_2011, Smith_2011, Avella-Medina_2019} are studied in a data privacy context.

Although the works mentioned above contain methods with certain convergence guarantees, they are not directly concerned with choosing the best representation of the data, either individually or in an aggregate fashion. This motivates our first contribution of the paper, which is a method for statistic selection to be used in the private data-sharing step. We propose to use the Fisher information  contained in the noisy statistic for the parameter as the criterion to compare. The Fisher information is a relevant measure when one wishes to use a likelihood-based estimation for the unknown parameter, such as maximum likelihood estimation and Bayesian estimation. The Fisher information is a function of the parameter and it depends jointly the statistic, its sensitivity, and the targeted privacy level.

The statistic selection step of private data sharing indeed bears practical importance. Take, for example, two choices for the shared statistic of a sensitive sample $X_{1:n}$ from a normal population with mean 0 and an unknown variance: one being the sample average of squares $X_{i}^{2}$ and the other being the sample average of the absolute values $\vert X_{i}\vert$. While the order of the sample size to learn the unknown variance is the same for both choices, a non-asymptotic analysis will reveal one of them preferable over the other. Indeed, in Examples \ref{ex: Mean parameter of the normal distribution}, \ref{ex: Variance parameter of the normal distribution}, and \ref{ex: Width parameter of the uniform distribution}, we show on simple distributions that the conventional statistics may not be the best choices to share the data privately.

As a second contribution, we propose effective MCMC algorithms that target the true posterior of the noisy ABC based on the noisy statistic. Obviously, there is no ideal algorithm that performs best in all the scenarios considered here. However, the broad family of exact-approximate MCMC algorithms offer effective choices. We inspect specifically algorithms that are based on pseudo-marginal MCMC \citep{andrieu_and_roberts_2009} and the recently introduced framework called Metropolis-Hastings with averaged acceptance ratio (MHAAR) in \citep{andrieu_et_al_2020}.

The organization of the paper is as follows. In Section \ref{sec: Differential privacy}, we introduce the basic concepts of differential privacy. In Section \ref{sec: Statistic selection based on Fisher information}, we discuss the problem of parameter estimation using privatized noisy statistics of the sensitive data and propose Fisher information as a measure of informativeness of the shared statistic (in noise). We show with analytical examples that, according to Fisher information, sharing non-standard statistics for a population parameter may be more beneficial compared to standard statistics. Section \ref{sec: Bayesian inference using MCMC} is reserved for the MCMC based Bayesian inference algorithms proposed for the models induced by the privacy preserving sharing scenarios described in Section \ref{sec: Statistic selection based on Fisher information}. In Section \ref{sec: Numerical examples} we present the results of some numerical experiments. Finally, we give our concluding remarks and possible future work in Section \ref{sec: Conclusion}.

\section{Differential Privacy} \label{sec: Differential privacy}

In this section, we take differential privacy as the primary definition of data privacy; although we also mention other closely related definitions.  

Let $\mathcal{X}$ be a universal set of individual data values. We call two data sets $x_{1:n}, x_{1:n}' \in \mathcal{X}^{n}$ neighbors if $x_{1:n}'$ can be obtained by changing the value of a single entry in $x_{1:n}$. In other words, the Hamming distance between the data sets, shown as $h(x_{1:n}, x_{1:n}')$ and defined as as the number of different elements between $x_{1:n}$ and $x_{1:n}'$, is equal to $1$. We call $\mathcal{A}$ a randomized algorithm whose output upon taking the input $x_{1:n}$ is a random variable $\mathcal{A}(x_{1:n})$ taking values from some $\mathcal{Y}$. 
\begin{definition}[Differential privacy]
We say that $\mathcal{A}$ is $(\epsilon, \delta)$-differentially private if, for any pair of neighboring data sets $x_{1:n}, x_{1:n}' \in \mathcal{X}^{n}$ from an input set and any subset of output values $O \subseteq \mathcal{Y}$, it satisfies the inequality \cite{Dwork_2006}
\[
\mathbb{P} \left[ \mathcal{A}(x_{1:n})\in O \right]  \leq e^{\epsilon} \mathbb{P} \left[ \mathcal{A}(x_{1:n}')\in O \right] + \delta.
\] 
\end{definition}
According to the above inequality, a randomized algorithm is differentially private if the probability distributions for the output obtained from two neighboring databases are `{\em similar}'. The privacy parameters $(\epsilon, \delta)$ are desired to be as small as possible as far as privacy is concerned.

Assume that a privacy preserving algorithm is required to return the value of a function $\varphi:\mathcal{X} \mapsto \mathbb{R}$ evaluated at the sensitive data set $x_{1:n}$ in a private fashion. One basic way of achieving this is via the \emph{Laplace mechanism} \cite{Dwork_2008}, which relies on the \emph{(global) sensitivity} of this function.
\begin{definition}[Global sensitivity] \label{defn: global sensitivity}
The $L_{p}$ sensitivity of a function $\psi: \mathcal{X} \mapsto \mathbb{R}^{d_{\psi}}$ for $p \geq 1$ is given by
\[
\nabla_{\psi, p} = \sup_{x_{1:n}, x_{1:n}': h(x_{1:n}, x'_{1:n}) = 1} \|\psi(x_{1:n}) - \psi(x_{1:n}')\|_{p}.
\]
\end{definition}

\begin{theorem}[Laplace mechanism] \label{thm: Laplace mechanism}
Let $\mathcal{A}$ be an algorithm that returns $\psi(x_{1:n}) + V$ on an input $x_{1:n} \in \mathcal{X}^{n}$, where $V_{i} \overset{\textup{i.i.d.}}{\sim} \textup{Laplace}(\nabla_{\psi, 1}/\epsilon)$ for $i = 1, \ldots, d_{\psi}$. Then $\mathcal{A}$ is $\epsilon$-DP.
\end{theorem}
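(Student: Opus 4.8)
The plan is to establish the pure differential privacy guarantee (the $\delta = 0$ case) directly at the level of output densities, since the Laplace noise makes $\mathcal{A}(x_{1:n})$ admit an explicit density on $\mathbb{R}^{d_\psi}$. First I would write the density of the output $\mathcal{A}(x_{1:n}) = \psi(x_{1:n}) + V$ at a point $y \in \mathbb{R}^{d_\psi}$. Because the coordinates $V_i$ are independent $\textup{Laplace}(\nabla_{\psi,1}/\epsilon)$ variables, this density factorizes as
\[
p_{x_{1:n}}(y) = \prod_{i=1}^{d_\psi} \frac{\epsilon}{2 \nabla_{\psi,1}} \exp\!\left( - \frac{\epsilon \, |y_i - \psi_i(x_{1:n})|}{\nabla_{\psi,1}} \right),
\]
where $\psi_i$ denotes the $i$-th component of $\psi$. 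Once this density is in hand, it suffices to show that for any neighboring pair $x_{1:n}, x_{1:n}'$ the ratio $p_{x_{1:n}}(y)/p_{x_{1:n}'}(y)$ is bounded by $e^\epsilon$ for every $y$; integrating this pointwise bound over an arbitrary measurable set $O$ then yields $\mathbb{P}[\mathcal{A}(x_{1:n}) \in O] \le e^\epsilon \, \mathbb{P}[\mathcal{A}(x_{1:n}') \in O]$, which is exactly $(\epsilon, 0)$-differential privacy.

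The core step is to control this density ratio. Taking the quotient, the normalizing constants cancel and I am left with
\[
\frac{p_{x_{1:n}}(y)}{p_{x_{1:n}'}(y)} = \exp\!\left( \frac{\epsilon}{\nabla_{\psi,1}} \sum_{i=1}^{d_\psi} \big( |y_i - \psi_i(x_{1:n}')| - |y_i - \psi_i(x_{1:n})| \big) \right).
\]
To each summand I would apply the reverse triangle inequality, $|y_i - \psi_i(x_{1:n}')| - |y_i - \psi_i(x_{1:n})| \le |\psi_i(x_{1:n}) - \psi_i(x_{1:n}')|$, so that the exponent is at most $(\epsilon/\nabla_{\psi,1}) \sum_{i} |\psi_i(x_{1:n}) - \psi_i(x_{1:n}')| = (\epsilon / \nabla_{\psi,1}) \, \|\psi(x_{1:n}) - \psi(x_{1:n}')\|_1$. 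Finally, invoking Definition \ref{defn: global sensitivity} with $p = 1$, the quantity $\|\psi(x_{1:n}) - \psi(x_{1:n}')\|_1$ is bounded above by $\nabla_{\psi,1}$ for neighbors, so the exponent is at most $\epsilon$ and the ratio is at most $e^\epsilon$, as required.

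I do not expect a serious obstacle here: the argument is essentially a one-line consequence of the reverse triangle inequality combined with the definition of $L_1$ sensitivity. The only points that require care are bookkeeping ones — matching the Laplace scale $\nabla_{\psi,1}/\epsilon$ to the exponent so that the sensitivity and $\epsilon$ cancel correctly, and noting that the per-coordinate bounds aggregate \emph{precisely} into the $L_1$ norm, which is why the $L_1$ (rather than some other $L_p$) sensitivity is the relevant one for the Laplace mechanism. It is also worth remarking that the bound holds uniformly in $y$, so no tail or concentration estimate is needed and the additive slack $\delta$ can be taken to be zero.
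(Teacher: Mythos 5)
Your proof is correct and complete: the density-ratio computation, the coordinate-wise reverse triangle inequality aggregating into the $L_1$ norm, and the appeal to the $L_1$ sensitivity are exactly the standard argument for the Laplace mechanism. The paper itself states this theorem without proof, citing it as a classical result from the differential privacy literature, so there is no in-paper argument to compare against; your writeup is the canonical one and has no gaps.
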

While the Laplace mechanism achieves pure differential privacy, i.e., with $\delta = 0$, another popular mechanism, called the Gaussian mechanism \citep{Dwork_and_Roth_2013} achieves differential privacy with $\delta > 0$. 
This mechanism adds Gaussian noise to $\psi(x_{1:n})$ where the variance of the noise is determined by the global sensitivity of $\psi(\cdot)$. The Gaussian mechanism is also a central tool according to other related definitions of differential privacy, such as the zero-concentrated differential privacy \citep{Bun_and_Steinke_2016} or the more recently introduced Gaussian differential privacy \citep{Dong_et_al_2022}. For an example, the following theorem presents the privacy property of the Gaussian mechanism according to Gaussian differential privacy.
\begin{theorem}[Gaussian differential privacy of the Gaussian mechanism \citep{Dong_et_al_2022}] 
Gaussian mechanism that returns $\psi(x_{1:n}) + V$, where $V_{i} \sim \mathcal{N}(0, \nabla_{\psi, 2}^{2}/\epsilon^{2})$, for $i = 1, \ldots, d_{\psi}$, satisfies $\epsilon$-Gaussian differential privacy.
\end{theorem}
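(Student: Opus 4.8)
The plan is to work through the trade-off-function characterization of Gaussian differential privacy (GDP) of \citet{Dong_et_al_2022}. Recall that a mechanism $\mathcal{A}$ is $\mu$-GDP if, for every pair of neighboring data sets, the binary hypothesis testing problem of deciding whether an observed output was generated by $\mathcal{A}(x_{1:n})$ or by $\mathcal{A}(x_{1:n}')$ is at least as hard as testing $\mathcal{N}(0,1)$ against $\mathcal{N}(\mu,1)$. Formally, writing $T(P,Q)(\alpha)$ for the minimal achievable type~II error of any test between $P$ and $Q$ at type~I error level at most $\alpha$, the requirement is $T(\mathcal{A}(x_{1:n}),\mathcal{A}(x_{1:n}')) \geq G_{\mu}$ pointwise on $[0,1]$, where $G_{\mu}(\alpha) = \Phi\left(\Phi^{-1}(1-\alpha) - \mu\right)$ and $\Phi$ is the standard normal cumulative distribution function. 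So the first step is simply to reduce the claim to showing $T(\mathcal{A}(x_{1:n}),\mathcal{A}(x_{1:n}')) \geq G_{\epsilon}$ for every neighboring pair.

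Next I would identify the two output laws. For a fixed neighboring pair, the mechanism returns $\psi(x_{1:n}) + V$ and $\psi(x_{1:n}') + V$, so the two competing distributions are the isotropic Gaussians $P = \mathcal{N}(\psi(x_{1:n}),\, \sigma^{2} I)$ and $Q = \mathcal{N}(\psi(x_{1:n}'),\, \sigma^{2} I)$ on $\mathbb{R}^{d_{\psi}}$, with common covariance $\sigma^{2} I$ and $\sigma = \nabla_{\psi,2}/\epsilon$. The heart of the argument is to compute the trade-off function of this Gaussian location testing problem. By the Neyman--Pearson lemma the optimal tests are thresholds of the likelihood ratio, and the log-likelihood ratio between $P$ and $Q$ is an affine function of the projection of the observation onto the unit vector $u = (\psi(x_{1:n}') - \psi(x_{1:n}))/\|\psi(x_{1:n}') - \psi(x_{1:n})\|_{2}$. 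Projecting along $u$ collapses the $d_{\psi}$-dimensional problem to a one-dimensional Gaussian mean-shift problem with separation $\|\psi(x_{1:n}') - \psi(x_{1:n})\|_{2}/\sigma$; exploiting the rotational invariance of the isotropic Gaussian, the orthogonal coordinates carry no information and can be discarded. This yields the exact identity
\[
T(P,Q) = G_{\|\psi(x_{1:n}') - \psi(x_{1:n})\|_{2}/\sigma}.
\]

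Finally I would plug in the sensitivity bound and use monotonicity. By Definition~\ref{defn: global sensitivity}, $\|\psi(x_{1:n}') - \psi(x_{1:n})\|_{2} \leq \nabla_{\psi,2}$ for any neighboring pair, so the effective separation satisfies $\|\psi(x_{1:n}') - \psi(x_{1:n})\|_{2}/\sigma \leq \nabla_{\psi,2}/(\nabla_{\psi,2}/\epsilon) = \epsilon$. Since $G_{\mu}(\alpha)$ is nonincreasing in $\mu$ for each fixed $\alpha$ --- a larger mean shift only makes the testing problem easier and lowers the trade-off curve --- we obtain $G_{\|\psi(x_{1:n}') - \psi(x_{1:n})\|_{2}/\sigma} \geq G_{\epsilon}$. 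Combining with the identity above gives $T(P,Q) \geq G_{\epsilon}$ for every neighboring pair, which is exactly $\epsilon$-GDP.

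The step I expect to be the main obstacle is the trade-off-function computation in the middle paragraph: one must argue carefully that the Neyman--Pearson optimal rejection regions are precisely the half-spaces orthogonal to $u$, and that the induced $(\text{type I}, \text{type II})$ error profile coincides with the univariate Gaussian trade-off function $G_{\|\cdot\|_{2}/\sigma}$. The reduction relies on the likelihood ratio being monotone in a single linear statistic together with the spherical symmetry of the noise, so that the only feature of the two means that survives is their Euclidean distance; everything after that is bookkeeping with $\Phi$ and the monotonicity of $G_{\mu}$.
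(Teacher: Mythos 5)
The paper does not prove this statement; it is quoted directly from \citet{Dong_et_al_2022}, so there is no in-paper proof to compare against. Your argument is correct and is essentially the proof given in that reference: reduce $\epsilon$-GDP to the trade-off-function inequality $T(P,Q)\geq G_{\epsilon}$, use the Neyman--Pearson lemma and the spherical symmetry of the isotropic noise to collapse the $d_{\psi}$-dimensional Gaussian location test to a one-dimensional mean shift of size $\|\psi(x_{1:n})-\psi(x_{1:n}')\|_{2}/\sigma$, and finish with the sensitivity bound together with the pointwise monotonicity of $G_{\mu}$ in $\mu$.
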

A similar result regarding the Gaussian mechanism also exists for the zero-concentrated differential privacy \citep{Bun_and_Steinke_2016}. Moreover, the mentioned privacy definitions are interrelated; see \citet{Dong_et_al_2022} and \citet{Bun_and_Steinke_2016} for the detailed relations.

One property of differential privacy relevant to our work is the post-processing property, which simply holds that the privacy loss is not increased by transforming the output through an algorithm independent of the private data given the output.
\begin{theorem}[Post-processing]
Let $\mathcal{A}_{1}$ be an $(\epsilon, \delta)$-DP algorithm with inputs from  $\mathsf{X}$ and outputs from $\mathcal{S}_{1}$, and let $\mathcal{A}_{2}: \mathcal{S}_{1} \mapsto \mathcal{S}$ be an algorithm that does not depend on $X$. Then, the algorithm $\mathcal{A} = (A_{2} o \mathcal{A}_{1})$ is $(\epsilon, \delta)$-DP.
\end{theorem}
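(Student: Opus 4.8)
The plan is to reduce the claim to the defining inequality of $(\epsilon,\delta)$-DP for $\mathcal{A}_{1}$ by pulling the output event back through $\mathcal{A}_{2}$. I would begin with the case in which $\mathcal{A}_{2}$ is deterministic. Fix any neighboring pair $x_{1:n}, x_{1:n}'$ with $h(x_{1:n}, x_{1:n}') = 1$ and any measurable $O \subseteq \mathcal{S}$. Since $\mathcal{A} = \mathcal{A}_{2} \circ \mathcal{A}_{1}$ and $\mathcal{A}_{2}$ does not depend on the data, the event $\{\mathcal{A}(x_{1:n}) \in O\}$ coincides exactly with the event that $\mathcal{A}_{1}(x_{1:n})$ lands in the preimage $T := \mathcal{A}_{2}^{-1}(O) = \{s \in \mathcal{S}_{1} : \mathcal{A}_{2}(s) \in O\}$. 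Hence $\mathbb{P}[\mathcal{A}(x_{1:n}) \in O] = \mathbb{P}[\mathcal{A}_{1}(x_{1:n}) \in T]$, and applying the $(\epsilon,\delta)$-DP guarantee of $\mathcal{A}_{1}$ to the single set $T$ immediately yields the desired inequality with the same $(\epsilon,\delta)$.

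The second step handles a randomized $\mathcal{A}_{2}$, which is the only part requiring care. I would represent the randomized post-processing as a deterministic map acting on the pair (input, auxiliary randomness): write $\mathcal{A}_{2}(s) = f(s, U)$, where $U$ is a random seed drawn independently of both the data and the internal randomness of $\mathcal{A}_{1}$, and $f$ is deterministic. For each fixed value $U = u$, the map $f(\cdot, u)$ is deterministic, so the first step applies and gives $\mathbb{P}[f(\mathcal{A}_{1}(x_{1:n}), u) \in O] \leq e^{\epsilon}\,\mathbb{P}[f(\mathcal{A}_{1}(x_{1:n}'), u) \in O] + \delta$ for every $u$.

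Finally, I would average this conditional inequality over $U$. Because $U$ is independent of $\mathcal{A}_{1}$, conditioning on $U = u$ does not alter the law of $\mathcal{A}_{1}(x_{1:n})$, so that $\mathbb{P}[\mathcal{A}(x_{1:n}) \in O] = \mathbb{E}_{U}\big[\mathbb{P}[f(\mathcal{A}_{1}(x_{1:n}), U) \in O \mid U]\big]$, and likewise for $x_{1:n}'$. Taking expectations of the per-$u$ inequality and invoking linearity together with monotonicity of expectation then gives $\mathbb{P}[\mathcal{A}(x_{1:n}) \in O] \leq e^{\epsilon}\,\mathbb{P}[\mathcal{A}(x_{1:n}') \in O] + \delta$, as required. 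The only genuine obstacle is the measurability bookkeeping in these steps, namely verifying that the preimage $T$ is measurable and that the representation $\mathcal{A}_{2} = f(\cdot, U)$ with an independent seed $U$ is legitimate; both are standard and introduce no conceptual difficulty.
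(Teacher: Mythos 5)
Your proof is correct. Note that the paper itself states this post-processing result without proof, treating it as a standard fact of differential privacy; your argument is precisely the canonical one (cf.\ Proposition 2.1 of Dwork and Roth, 2014): handle deterministic $\mathcal{A}_{2}$ by applying the $(\epsilon,\delta)$-DP inequality of $\mathcal{A}_{1}$ to the preimage $T=\mathcal{A}_{2}^{-1}(O)$, then extend to randomized $\mathcal{A}_{2}$ by conditioning on an independent seed $U$ and averaging, which preserves the inequality because the additive $\delta$ term is constant in $u$. Both steps are sound, and your closing remark correctly identifies measurability of $T$ and the representation $\mathcal{A}_{2}=f(\cdot,U)$ as the only technical bookkeeping required.
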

In this work, all the Bayesian inference algorithms in Section \ref{sec: Bayesian inference using MCMC} act as post-processing operations.

\section{Statistic selection based on Fisher information} \label{sec: Statistic selection based on Fisher information}

We consider a data privacy setting where we have some sensitive data $X_{1}, \ldots, X_{n} \overset{\text{i.i.d}}{\sim} \mathcal{P}_{\theta}$ for some distribution $\mathcal{P}_{\theta}$ on $\mathcal{X}$ with parameter $\theta \in \Theta$. We assume that each $X_{i}$ belongs to a distinct individual. We aim to infer $\theta$ based on the outputs of a differentially private operation on the sensitive data $X_{1:n}$.  One example case is when a statistic of the data $S_{n}: \mathcal{X}^{n} \mapsto \mathbb{R}^{d_{s}}$, for some $d_{s} \geq 1$,  is released in noise as
\begin{equation} \label{eq: noisy statistic}
Y = S_{n}(X_{1:n}) + V, \quad V \sim \mathcal{P}_{\epsilon, S_{n}},
\end{equation}
where $\mathcal{P}_{\epsilon, S_{n}}$ is the distribution of the privacy preserving noise $V$ whose parameter(s) is (are) adjusted according to $S_{n}$ and $\epsilon$. We will show two examples of $\mathcal{P}_{\epsilon, S_{n}}$ in Sections \ref{sec: Fisher information with additive statistic and Gaussian noise} and \ref{sec: Fisher information with additive statistic and non-gaussian noise}, which arise from the Gaussian and Laplace mechanisms, respectively. A common choice of $S_{n}$ is an additive statistic as in
\begin{equation} \label{eq: additive statistic}
S_{n}(X_{1:n}) = \frac{1}{n} \sum_{i = 1}^{n} s(X_{i}).
\end{equation}
However, in this paper we will consider non-additive statistics as well.

In \eqref{eq: noisy statistic}, (a statistic of) the collected data is released in batch manner.  An alternative to that is when each $X_{i}$ is shared privately as
\begin{equation} \label{eq: sequential release}
Y_{i} = s(X_{i}) + V_{i}, \quad V_{i} \sim \mathcal{P}_{\epsilon, s}.
\end{equation}
 We will call this setting the sequential release, as opposed to the batch release in \eqref{eq: noisy statistic}.

There are several other forms of differentially private data sharing, such as the exponential mechanism. In this paper we will confine the discussion to the scenarios in \eqref{eq: noisy statistic} and \eqref{eq: sequential release}. However, as it will be clear, the proposed ideas can also be adopted for other mechanisms.

How should we choose the statistic $S_{n}$ (or $s$)? We would like to make a choice (among several candidates) so that the resulting $Y$ is most `informative'. In this paper, we consider the Fisher information as the measure of the amount of `information' that $Y$ carries about $\theta$. The Fisher information not only arises naturally in frequentist contexts, but it is also relevant to Bayesian estimation, especially for big data, owing to the Bernstein-von Mises theorem \citep{Cam_1986} Under certain regularity conditions, the posterior distribution tends to have a normal distribution with covariance determined by the Fisher information.

The Fisher information at $\theta$ is determined by the population distribution $\mathcal{P}_{\theta}$, the function $S_{n}$ (or $s$), and the privacy level $\epsilon$. To concretize the discussion, we confine the attention to the batch sharing setting in \eqref{eq: noisy statistic}. The marginal density of $Y = y$ given $\theta$ can be written as
\begin{equation} \label{eq: marginal distribution of y in terms of x}
p_{\epsilon, S_{n}}(y \vert \theta) = \int p_{\epsilon, S_{n}}(y \vert x_{1:n})  \prod_{i = 1}^{n} p(x_{i} \vert \theta) d x_{1:n}.
\end{equation}
The Fisher information with respect to this marginal distribution can be expressed as
\begin{align}
F(\theta) &= \mathbb{E} \left[ -\frac{\partial^{2} \log p_{\epsilon, S_{n}}(Y \vert \theta)}{\partial \theta \partial \theta^{T}}\right] \label{eq: FIM general} \\
&= \mathbb{E} \left[ \gamma_{\epsilon, S_{n}}(\theta; y) \gamma_{\epsilon, S_{n}}(\theta; y)^{T}\right], \label{eq: FIM general - 2}
\end{align}
where $\gamma_{\epsilon, S_{n}}(\theta; y)$ is the well-known score vector, defined as
\[
\gamma_{\epsilon, S_{n}}(\theta; y) = \frac{\partial \log p_{\epsilon, S_{n}}(Y \vert \theta)}{\partial \theta }.
\]
Whether $F(\theta)$ above can be calculated exactly or not and how it should be calculated approximately in the latter case depend on the nature of the statistic and/or the privacy preserving mechanism. Specifically for \eqref{eq: FIM general}, it is critical whether the statistic is additive or not, and/or the privacy preserving noise is Gaussian or not. Furthermore, the approach to calculate $F(\theta)$ also depends on whether the data is shared in a batch or sequential manner.

Clearly, $F(\theta)$ is a function of $\theta$ and one cannot know the informativeness of the selected statistic for the stochastic process in question without knowing the true value $\theta$ that governs the process. This appears to be an issue in applying the proposed strategy of choosing statistics based on $F(\theta)$. However, the proposed strategy can be useful in several ways. For example, in some cases one statistic can be shown to yield a larger $F(\theta)$ than another uniformly over the domain of $\theta$ (see Example \ref{ex: Variance parameter of the normal distribution}.) In other cases $F(\theta)$ can be combined with the prior distribution of $\theta$, say $\eta(\theta)$, to come up with an overall score such as $\int F(\theta) \eta(\theta) d\theta$. Finally, when one statistic is not uniformly better in terms of $F(\theta)$ than the other statistic and no prior information is available, an initial chunk of the data can be used to obtain a posterior distribution, which is then to be used to determine the best statistic as well as to act as the prior distribution for the rest of the data. 

In the rest of this section, we propose algorithms to (approximately) compute $F(\theta)$ under several practically relevant combinations of those mentioned conditions. Table \ref{tbl: Algorithm-model matching Fisher information} shows the scenario-algorithm matching which indicates the most suitable algorithm to compute $F(\theta)$ for the scenario considered.
\begin{table}[t]
\caption{Model-method matching for calculating $F(\theta)$}
\label{tbl: Algorithm-model matching Fisher information}
\centerline{
\begin{tabular}{ c  c  c }
\hline
\textbf{Model} & \textbf{Method} & \textbf{Requirement} \\
\hline
additive statistic, normal noise & Section \ref{sec: Fisher information with additive statistic and Gaussian noise} & $\mu_{s}(\theta)$ and $\Sigma_{s}(\theta)$ are differentiable w.r.t.\ $\theta$ \\
additive statistic, non-gaussian noise & Algorithm \ref{alg: Monte Carlo estimation of FIM} & $\mu_{s}(\theta)$ and $\Sigma_{s}(\theta)$ are differentiable w.r.t.\ $\theta$ \\
non-additive statistic & Algorithm \ref{alg: Monte Carlo estimation of FIM_non additive} & $p(x \vert \theta)$ is differentiable w.r.t.\ $\theta$ \\
sequential release & Algorithm \ref{alg: Monte Carlo estimation of FIM for sequential release} & $p(x \vert \theta)$ is differentiable w.r.t.\ $\theta$ \\
\hline
\end{tabular}
}
\end{table}

\subsection{Fisher information with additive statistic and Gaussian noise} \label{sec: Fisher information with additive statistic and Gaussian noise}
In the classical DP setting, imperfect privacy, i.e, $(\epsilon, \delta)$-DP for $\delta > 0$, can be obtained via the Gaussian mechanism \citep{Dwork_and_Roth_2013}. The Gaussian mechanism is not only a popular choice in differential privacy studies, but also \emph{the} natural choice for Gaussian differential privacy \citep{Dong_et_al_2022}, a privacy definition that leads to a more interpretable Neyman-Pearson type error analysis than the classical differential privacy.

The Gaussian mechanism is a noise adding mechanism which can be described generally as
\begin{equation} \label{eq: noisy statistic - Gaussian}
Y = S_{n}(X_{1:n}) + V, \quad V \sim 
\mathcal{N}(0, \sigma_{s, n, \epsilon}^{2} I).
\end{equation}
For ease of exposition, one can take $\sigma_{s, n, \epsilon}^{2} = \Delta_{s, 2}^{2}/(n^{2}\epsilon^{2})$, where, recall that, $\Delta_{s, 2}$ is the $L_{2}$ sensitivity of $s(\cdot)$. Here, the parameter $\epsilon$ has a slightly different meaning than in the definition of classical differential privacy. Specifically, the above choice for the noise distribution provides $\epsilon$-Gaussian DP and \emph{not} $\epsilon$-DP. We could make the variance also depend on a $\delta > 0$ parameter to provide $(\epsilon, \delta)$-DP, but this would distract the main messages of the discussion.

Suppose that $S_{n}$ is additive as in \eqref{eq: additive statistic}. Then one can employ a normal approximation for the distribution of $S_{n}(X_{1:n})$, along the lines of \citet{Bernstein_and_Sheldon_2018}. Let 
\[
\mu_{s}(\theta) = \mathbb{E}_{\theta}[s(X)], \quad \Sigma_{s}(\theta) = \text{Var}_{\theta}[s(X)]
\]
be the mean and covariance of $s(X)$. For large $n$, the additive statistic approximately has a normal distribution
\begin{equation} \label{eq: normal approximation for additive statistic}
S_{n}(X_{1:n}) \sim \mathcal{N}(\mu_{s}(\theta), \Sigma_{s}(\theta)/n),
\end{equation}
Combining \eqref{eq: normal approximation for additive statistic} with \eqref{eq: noisy statistic - Gaussian}, the marginal distribution of $Y$ is approximated as
\begin{equation} \label{eq: FIM for normal approximation}
Y \sim \mathcal{N}\left(\mu_{s}(\theta), \Sigma_{s}(\theta)/n + \sigma_{s, n, \epsilon}^{2} I \right).
\end{equation}
Finally, considering the transformation 
\[
\theta \mapsto \left[ \mu_{s}(\theta), \Sigma_{s}(\theta)/n + \sigma_{s, n, \epsilon}^{2} I \right],
\]
the $(i, j)$'th element of $F(\theta)$ for the distribution in \eqref{eq: FIM for normal approximation} is given by
\begin{equation*}
[F(\theta)]_{i, j} = \frac{\partial \mu_{s}(\theta)^{T}}{\partial \theta_{i}} H_{s, \epsilon, n}(\theta)^{-1}  \frac{\partial \mu_{s}(\theta)}{\partial \theta_{j}} + \frac{\text{tr}(G)}{2} 
\end{equation*}
where $H_{s, \epsilon, n}(\theta) := \frac{\Sigma_{s}(\theta)}{n} + \sigma_{s, n, \epsilon}^{2} I$ is the covariance of $Y$ and 
\[
G = \frac{1}{n^{2}}\left( H_{s, \epsilon, n}(\theta)^{-1} \frac{\partial \Sigma_{s}(\theta)}{\partial \theta_{i}}  H_{s, \epsilon, n}(\theta)^{-1} \frac{\partial \Sigma_{s}(\theta)}{\partial \theta_{j}} \right).
\]

Examples \ref{ex: Mean parameter of the normal distribution}, \ref{ex: Variance parameter of the normal distribution}, and \ref{ex: Width parameter of the uniform distribution} demonstrate how the proposed scheme can be used in simple but common inference problems.
 
\begin{example}[Mean of the normal distribution] \label{ex: Mean parameter of the normal distribution}
Assume that $\mathcal{X} = (0, A)$ and the considered population distribution for $X$ is $\mathcal{P}_{\theta} = \mathcal{N}(\theta, 1)$. Here $A$ is a number which arises due to the nature of the data generation process, large enough to have negligible effect on the distribution of $X$ (The same will be assumed in the other examples in the paper.) For statistic selection, one may want to use $s(x) = x^{a}$, where $a$ is an odd integer. Let us compare $a = 1$ and $a = 3$. We have 
\[
\mu_{s}(\theta) =  \begin{cases} \theta & \text{for } a = 1 \\ \theta^{3} + 3\theta & \text{for } a = 3 \end{cases}, \quad \Sigma_{s}(\theta)  = \begin{cases} 1 &  \text{for } a = 1 \\ 9 \theta^{4} + 36 \theta^{2} + 15   & \text{for } a = 3 \end{cases},
\]
which are differentiable w.r.t.\ $\theta$ with derivatives straightforward to calculate. With the Gaussian mechanism, the variance of $Y$ becomes $H_{s, \epsilon, n} = \Sigma_{s}(\theta)/n + A^{2a}/(n^{2} \epsilon^{2})$.

Figure \ref{fig: Normal_mean_A_10_n_100_eps_100} compares $F(\theta)$ for $a = 1$ and $a = 3$, separately for $\epsilon = 1$ and $\epsilon = \infty$ corresponding to the non-private case, with $n = 100$ and $A = 10$. As it can be observed, while $s(x) = x$ is always better in the non-private case, in the private case (when $\epsilon = 1$) the choice $s(x) = x^{3}$ seems better for larger values of $\theta$.
\begin{figure}
\centerline{
\includegraphics[scale = 1]{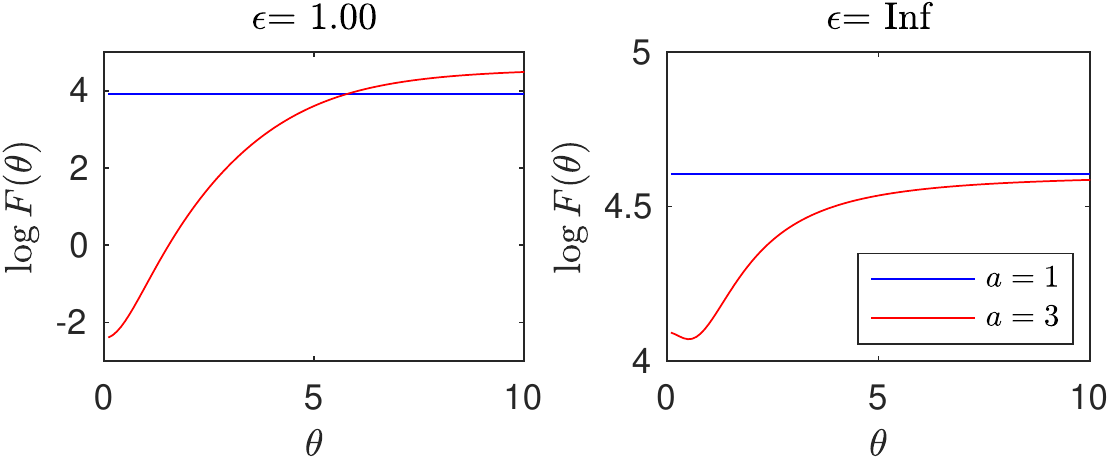}
}
\caption{$F(\theta)$ for the mean parameter of $\mathcal{N}(\theta, 1)$ when $s(x) = \vert x\vert^{a}$. Left: $\epsilon = 1$, Right: $\epsilon = \infty$ (non-private case).}
\label{fig: Normal_mean_A_10_n_100_eps_100}
\end{figure}
\end{example}

\begin{example}[Variance of the normal distribution] \label{ex: Variance parameter of the normal distribution}
Assume that $\mathcal{X} = (-A, A)$ and the assumed population distribution for $X$ is $\mathcal{P}_{\theta} = \mathcal{N}(0, \theta)$. Consider $s(x) = \vert x\vert^{a}$. So 
\begin{align*}
\mu_{s}(\theta) &= (2\theta)^{a/2} \frac{1}{\sqrt{\pi}}\Gamma\left(\frac{a+1}{2}\right), \\
\Sigma_{s}(\theta) &= (2\theta)^{a} \left[\frac{1}{\sqrt{\pi}}\Gamma\left(\frac{2a+1}{2}\right) -\frac{1}{\pi} \Gamma^{2}\left(\frac{a+1}{2}\right)\right],
\end{align*}
which are differentiable w.r.t.\ $\theta$. With the Gaussian mechanism, we have $H_{s, \epsilon, n} = \Sigma_{s}(\theta)/n + A^{2a}/(n^{2} \epsilon^{2})$.

Figure \ref{fig: Normal_variance_A_10_n_100} compares $F(\theta)$ for various values of $a$, separately for $\epsilon = 1$ and $\epsilon = \infty$ corresponding to the non-private case, with $n = 100$ and $A = 100$. As it can be observed, while $s(x) = x^{2}$ is always better in the non-private case, in the private case (when $\epsilon = 1$), the best choice is $s(x) = \vert x\vert$.

\begin{figure}[ht]
\centerline{
\includegraphics[scale = 1]{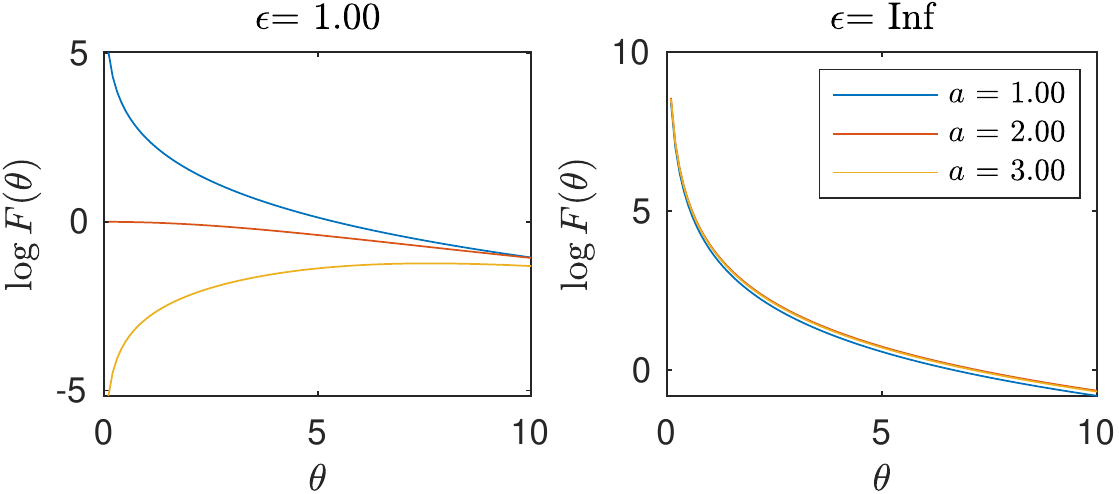}
}
\caption{$F(\theta)$ for the variance parameter of $\mathcal{N}(0, \theta)$ when $s(x) = \vert x\vert^{a}$. Left: $\epsilon = 1$, Right: $\epsilon = \infty$ (non-private case).}
\label{fig: Normal_variance_A_10_n_100}
\end{figure}
\end{example}

\begin{example}[Width of the uniform distribution] \label{ex: Width parameter of the uniform distribution}
Let $\mathcal{P}_{\theta} = \textup{Unif}(-\theta, \theta)$ so that $2 \theta$ is the width parameter of the uniform distribution. Assume that $s(x) = \vert x\vert^{a}$ for some $a > 0$. We have 
\[
\mu_{s}(\theta) = 
\frac{\theta^{a}}{a+1}, \quad \Sigma_{s}(\theta) = \frac{\theta^{2a} a^{2}}{(a+1)^{2} (2 a + 1)},
\]
which are differentiable w.r.t.\ $\theta$. Assume that $\mathcal{X} = (-A, A)$. Then the sensitivity of $s$ is $A^{a}$, hence $\Delta_{s, n} = A^{a}/n$. This yields that $H_{s, \epsilon, n}(\theta) = \Sigma_{s}(\theta)/n + A^{2a}/(n^{2}\epsilon^{2})$. 

Figure \ref{fig: Uniform_width_A_10_n_100} compares $F(\theta)$ for several values of $a$, separately for $\epsilon = 1$ and $\epsilon = \infty$ corresponding to the non-private case, with $n = 100$ and $A = 100$.
\begin{figure}[ht]
\centerline{
\includegraphics[scale = 1]{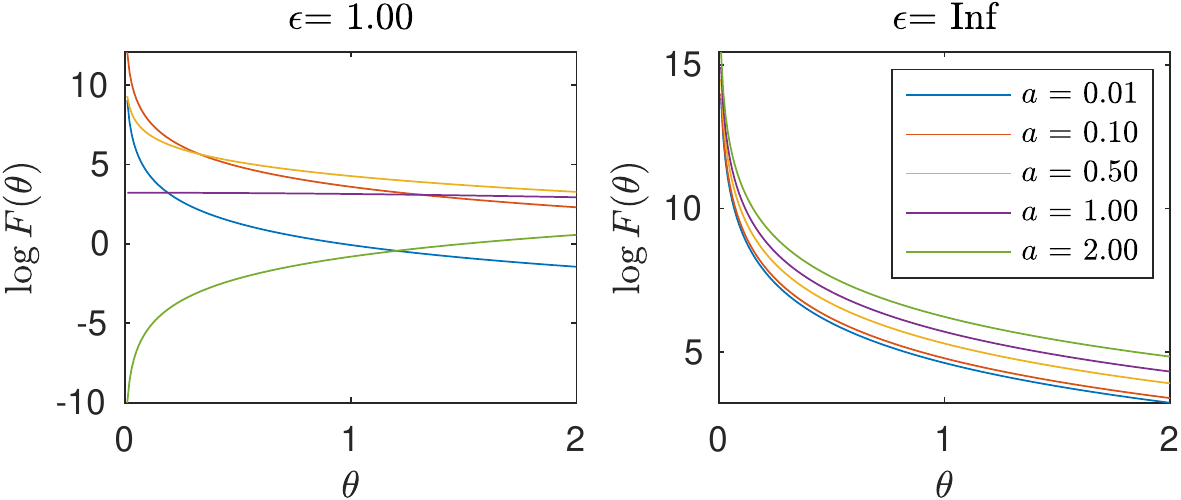}
}
\caption{$F(\theta)$ for the width parameter of $\text{Unif}(-\theta, \theta)$ when $s(x) = \vert x\vert^{a}$. Left: $\epsilon = 1$, Right: $\epsilon = \infty$ (non-private case).}
\label{fig: Uniform_width_A_10_n_100}
\end{figure}

\end{example}
Example \ref{ex: Width parameter of the uniform distribution} reveals that while $F(\theta)$ does not exist for the width parameter of the uniform distribution, it does exist for the marginal distribution of $Y$ as long as $\mu_{s}(\theta)$ $\Sigma_{s}(\theta)$ are differentiable with respec to $\theta$, thanks to the normal approximation of the distribution of the statistic. This is a promising fact for the breadth of models where the proposed methodology for statistic selection applies.

\subsection{Fisher information with additive statistics and non-gaussian noise} \label{sec: Fisher information with additive statistic and non-gaussian noise}
In the previous section, the Gaussian mechanism enabled us to perform an analytical comparisons between statistics. For other privacy preserving mechanisms, comparisons based on $F(\theta)$ can still be made in the same spirit, however by using Monte Carlo estimates of $F(\theta)$, as we will see next.

\begin{algorithm}[t]
\caption{Monte Carlo estimation of $F(\theta)$ for \eqref{eq: noisy statistic} - normal approximation for $f_{S_{n}}(u \vert \theta)$.}
\label{alg: Monte Carlo estimation of FIM}
\KwIn{$\theta$: parameter; $n$: data size; $N$, $M$: Monte Carlo parameters}
\KwOut{$\widehat{F(\theta)}$: Estimate of $F(\theta)$}
\For{$i = 1, \ldots, N$}{
Sample $y^{(i)} \sim p_{\epsilon, S_{n}}(y\vert\theta)$\\
\For{$j = 1, \ldots, M$}{
Sample $u^{(j)} \sim q_{\theta}(\cdot)$, calculate 
\[
w_{j} =\frac{ f_{S_{n}}(u^{(j)} \vert \theta) g_{\epsilon, S_{n}}(y^{(i)} \vert u^{(j)})}{ q_{\theta}(u^{(j)})}
\]
 using \eqref{eq: normal approximation of f}.
}
Using \eqref{eq: normal approximation of f}, calculate 
\[
\widehat{\gamma}_{\epsilon, S_{n}}(\theta; y^{(i)}) = \sum_{j = 1}^{N} \frac{\partial \log f_{S_{n}}(u^{(j)} \vert \theta) }{\partial \theta} \frac{w_{j}}{\sum_{j' = 1}^{N} w_{j'}}.
\]
}
\Return $\widehat{F(\theta)} =\frac{1}{N} \sum_{i = 1}^{N} \widehat{\gamma}_{\epsilon, S_{n}}(\theta; y^{(i)}) \widehat{\gamma}_{\epsilon, S_{n}}(\theta; y^{(i)})^{T}.$
\end{algorithm}

A typical example to a non-gaussian mechanism is the Laplace mechanism. In the Laplace mechanism to provide $\epsilon$-DP, privacy preserving noise in \eqref{eq: noisy statistic} is distributed according to
\[
V \sim \text{Laplace}(\Delta_{s, 1}/(n \epsilon)).
\]
As long as $S_{n}$ is an additive statistic, we can employ the normal approximation in \eqref{eq: normal approximation for additive statistic} for its distribution. Even so, the approximation of $F(\theta)$ given in Section \ref{sec: Fisher information with additive statistic and Gaussian noise} may not be accurate when non-gaussian noise is used to preserve privacy. Furthermore, the integral in \eqref{eq: marginal distribution of y in terms of x} will be typically intractable, as well as the derivative of its logarithm.  As a result, it may also be difficult to calculate $F(\theta)$ exactly. Fortunately, a consistent Monte Carlo estimator of $F(\theta)$ is available. We present its details in the following. 

Define the variable $U = S_{n}(X_{1:n})$, and let $f_{S_{n}}(u \vert \theta)$ be the probability density of $U$ given $\theta$ evaluated at $u$. In most of the models considered in this paper, the conditional distribution $p_{\epsilon, S_{n}}(y \vert x_{1:n})$ depends only on $u = S_{n}(x_{1:n})$. (See the discussion about smooth sensitivity in Section \ref{sec: Fisher information for based on the true marginal distribution} for an exception to this.). If that is the case, we can define $g_{\epsilon, S_{n}}(y \vert u)$ to be the density of the conditional distribution of $Y$ given $U$ calculated at $Y = y, U = u$. Then the marginal distribution can also be written as
\begin{equation} \label{eq: marginal distribution of y}
p_{\epsilon, S_{n}}(y \vert \theta) = \int g_{\epsilon, S_{n}}(y\vert u) f_{S_{n}}(u \vert \theta) d u.
\end{equation}

Based on \eqref{eq: marginal distribution of y}, the Fisher's identity for the score vector can be written as
\[
\gamma_{\epsilon, S_{n}}(\theta; y) = \int \frac{\partial \log f_{S_{n}}(u \vert \theta)}{\partial \theta} p( u \vert y, \theta) du. 
\]
where the integral is taken with respect to the posterior distribution 
\[
p( u \vert y, \theta) \propto f_{S_{n}}(u \vert \theta) g_{\epsilon, S_{n}}(y\vert u).
\]
The Monte Carlo estimation of $F(\theta)$ is based on estimating the above integral via importance sampling, exact sampling (e.g.\ via rejection sampling) or approximate sampling (via MCMC)  from $p( u \vert y, \theta)$. Once we have a method for obtaining a numerical approximation of the score vector at given $y$ and $\theta$, $F(\theta)$ can be estimated according to \eqref{eq: FIM general - 2}.

A Monte Carlo estimator of $F(\theta)$ in the presence of additive statistic and non-gaussian noise is given in Algorithm \ref{alg: Monte Carlo estimation of FIM}. The estimator is based on the estimation of the score vector using self-normalised importance sampling with proposal distribution $q_{\theta}(u)$. Further, the normal approximation in \eqref{eq: normal approximation for additive statistic} is employed, enabling
\begin{equation} \label{eq: normal approximation of f}
f_{S_{n}}(u \vert \theta) = \mathcal{N}(u; \mu_{s}(\theta), \Sigma_{s}(\theta)/n)
\end{equation}
in the calculations. Sampling from $p_{\epsilon, S_{n}}(y \vert \theta)$ can be performed straightforwardly since the model for $Y$ is generative. Also, the importance sampling stage (the inner loop) can be replaced by a MCMC routine to collect $M$ samples with equal weights for $u$ from the conditional distribution $p(u \vert y, \theta)$ and estimate the score by  $\frac{1}{M}\sum_{j = 1}^{M} \frac{\partial \log f_{S_{n}}(u \vert \theta)}{\partial \theta}$.

\subsection{Fisher information based on the true marginal distribution} \label{sec: Fisher information for based on the true marginal distribution}
Note that Algorithm \ref{alg: Monte Carlo estimation of FIM} exploits the normal approximation in \eqref{eq: normal approximation for additive statistic} for the statistic $S(X_{1:n})$. In some cases, this approximation may be unavailable or unjustifiable. This may be because $S_{n}(X_{1:n})$ is a \emph{non-additive} statistic, or the moments $\mu(\theta)$ and $\Sigma(\theta)$ are intractable.

In such cases, one can still devise a Monte Carlo method to estimate $F(\theta)$ based on the true marginal distribution of the observed (noisy) statistic in \eqref{eq: noisy statistic}. Such a method is given in Algorithm \ref{alg: Monte Carlo estimation of FIM_non additive}.  Algorithm \ref{alg: Monte Carlo estimation of FIM_non additive} exploits \eqref{eq: marginal distribution of y in terms of x}, which expresses the marginal distribution in terms of $X_{1:n}$. The reason we resorted to $X_{1:n}$ instead of $U = S_{n}(X_{1:n})$ is that in this part we are concerned with a setting where the probability distribution of $U$ is hard to find or approximate. Accordingly, the algorithm samples $X_{1:n}$ from their population distribution and uses importance sampling to calculate the score vector as an expectation of the derivative of the log-joint density of $(X_{1:n}, Y)$ with respect to the posterior distribution of $X_{1:n}$ given $Y$. As a result a requirement is that the population distribution is differentiable with respect to $\theta$. 

\begin{algorithm}[t]
\caption{Monte Carlo estimation of Fisher information for \eqref{eq: noisy statistic} - exact marginal distribution}
\label{alg: Monte Carlo estimation of FIM_non additive}
\KwIn{$\theta$: parameter; $n$: data size; $N$, $M$: Monte Carlo parameters}
\KwOut{$\widehat{F(\theta)}$: Estimate of $F(\theta)$}
\For{$i = 1, \ldots, N$}{
Sample $y^{(i)} \sim p_{\epsilon, S_{n}}(y \vert \theta)$\\
\For{$j = 1, \ldots, M$}{
\For{$t = 1, \ldots, n$}{
Sample $x_{t}^{(j)} \sim p(x \vert \theta)$.
}

Set $w_{j} = p_{\epsilon, S_{n}}(y^{(i)} \vert x_{1:n}^{(j)})$.
}
Calculate 
\[
\widehat{\gamma}_{\epsilon, S_{n}}(\theta; y^{(i)}) = \sum_{j = 1}^{N}  \left( \sum_{t = 1}^{n} \frac{\partial\log p(x_{t}^{(j)} \vert \theta)}{\partial \theta} \right) \frac{w_{j}}{\sum_{j' = 1}^{N} w_{j'}}.
\]
}
\Return $\widehat{F(\theta)} = \frac{1}{N} \sum_{i = 1}^{N} \widehat{\gamma}_{\epsilon, S_{n}}(\theta; y^{(i)}) \widehat{\gamma}_{\epsilon, S_{n}}(\theta; y^{(i)})^{T}$.
\end{algorithm}

At this point it is worth pointing to smooth sensitivity \citep{Nissim_et_al_2007}, a method that has proven quite useful in reducing privacy preserving noise considerably, especially for non-additive statistics, which are under consideration in this section. As before, one could use the global sensitivity of $S_{n}$ as in Definition \ref{defn: global sensitivity} to determine the amount of noise to generate $Y$. However, for some non-additive statistics, such as $\max$ and $\text{median}$, adding noise based on the global sensitivity can be quite ineffective. This is because the global sensitivity of the those functions is as large as the range of $S_{n}$. For example, if $\mathcal{X}  \mapsto [0, A]$, the global sensitivites of $\max$ and $\text{median}$ are both $A$. Instead, one can generate the noisy statistic $Y$ by adjusting the amount of noise using the smooth sensitivity defined in \citet{Nissim_et_al_2007}. 

\begin{definition}[Smooth sensitivity]
For a function $\psi: \mathcal{X}^{n} \mapsto \mathbb{R}^{d_{\psi}}$ and $\beta > 0$, define the $\beta$-smooth sensitivity as
\[
\Delta^{smooth}_{\psi, \beta}(x_{1:n}) = \max_{x'_{1:n} \in \mathcal{X}^{n}} L_{\psi}(x'_{1:n}) e^{-\beta h(x_{1:n}, x'_{1:n})},
\]
where $L_{\psi}(x_{1:n})$ is called the local sensitivity at $x_{1:n}$ and defined as
\[
L_{\psi}(x_{1:n}) = \max_{x'_{1:n}: h(x_{1:n}, x_{1:n}') = 1} \|\psi(x_{1:n})-\psi(x'_{1:n}) \|_{1}.
\]
\end{definition}

Differential privacy can be provided based on local sensitivity using appropriate noise-adding mechanisms. For example, to satisfy $(\epsilon, \delta)$-DP for $\epsilon, \delta \in (0, 1)$, one can generate $Y = S_{n}(X_{1:n}) + V$ with
\[
V \sim \text{Laplace}\left(\Delta^{smooth}_{S_{n}, \beta}(X_{1:n})/ \alpha \right),
\]
where $\alpha = \epsilon/2$ and $\beta = \epsilon/[2 \ln (2 / \delta)]$. (Pure differential privacy, i.e., with $\delta = 0$, can also be obtained using smooth sensitivity, however with a noise distribution whose tails decay slower than exponentially, such as Cauchy distribution.)

Note that, contrary to the earlier examples, using smooth sensitivity determines the noise distribution dependent on $X_{1:n}$, rendering a quite non-standard joint distribution, in particular a non-standard posterior distribution for the parameter of interest $\theta$. This highlights the importance of general-purpose inference methods in the privacy context such as MCMC. 

Finally, a remark on the notation. When smooth sensitivity is used, 
the density of the conditional distribution $y$ given $X_{1:n} = x_{1:n}$ depends on not only $S_{n}(x_{1:n})$ but also $x_{1:n}$ itself, since $x_{1:n}$ determines the noise variance also. To cover those cases, in Algorithm \ref{alg: Monte Carlo estimation of FIM_non additive} we resort the more general representation $p_{\epsilon, S_{n}}(y^{(i)} \vert x_{1:n}^{(j)})$ to denote the conditional distribution of $y$ given $x_{1:n}$.

\subsection{Fisher information with sequential release} \label{sec: Fisher information with sequential release}
In Sections \ref{sec: Fisher information with additive statistic and Gaussian noise}-\ref{sec: Fisher information for based on the true marginal distribution} we looked at scenarios where a single statistic of the sensitive data $X_{1:n}$ is shared. Alternatively, private data can be sequentially released as $Y_{1}, \ldots, Y_{n}$, where each $Y_{i}$ is a noisy version of $s(X_{i})$ as in \eqref{eq: sequential release}. This corresponds to a scenario where the analyst collects data from the individuals separately in a privacy preserving way. The former and the latter models are also referred to as the centralized model and the local model \citep{Kasiviswanathan_et_al_2008}, respectively. The local model comes with the expense of adding much more noise to each $Y_{i}$ than the statistic $S(X_{1:n})$. Specifically, to provide $\epsilon$-DP with the Laplace mechanism, we must have 
\[
Y_{i} = s(X_{i}) + V_{i}, \quad V_{i} \sim \text{Laplace}(\Delta_{s, 1}/\epsilon)
\]
which no longer has the $1/n$ factor in its noise parameter.

\begin{algorithm}[t]
\caption{Monte Carlo estimation of $F(\theta)$ for \eqref{eq: sequential release}}
\label{alg: Monte Carlo estimation of FIM for sequential release}
\KwIn{$\theta$: parameter; $n$: data size; $N$, $M$: Monte Carlo parameters}
\KwOut{$\widehat{F(\theta)}$: Estimate of $F(\theta)$}
\For{$i = 1, \ldots, N$}{
Sample $y^{(i)} \sim p_{\epsilon, s}(y \vert \theta)$\\
\For{$j = 1, \ldots, M$}{
Sample $x^{(j)} \sim q_{\theta}(x)$ and calculate 
\[
w_{j} = p(x^{(j)} \vert \theta) g_{\epsilon, s}(y^{(i)} \vert s(x^{(j)})) / q_{\theta}(x^{(j)}).
\]
}
Calculate 
\[
\widehat{\gamma}_{\epsilon, s}(\theta; y^{(i)}) = \sum_{j = 1}^{N} \frac{\partial \log p(x^{(j)} \vert \theta)}{\partial \theta} \frac{w_{j}}{\sum_{j' = 1}^{N} w_{j'}}.
\]
}
\Return $\widehat{F(\theta)} = \frac{n}{N} \sum_{i = 1}^{N} \widehat{\gamma}_{\epsilon, S_{n}}(\theta; y^{(i)}) \widehat{\gamma}_{\epsilon, S_{n}}(\theta; y^{(i)})^{T}$.
\end{algorithm}

In the local model, we can talk about the marginal distribution of each $Y_{i}$, whose probability density can be written as 
\begin{equation}\label{eq: marginal distribution sequential release}
p_{\epsilon, s}(y \vert \theta) = \int p(x\vert\theta) g_{\epsilon, s}(y \vert s(x)) dx,
\end{equation}
where $g_{\epsilon, s}(y \vert s(x))$ is the probability density function of the conditional distribution of $Y$ given $S(X)$, which, according to \eqref{eq: sequential release}, reduces to the probability of $\mathcal{P}_{\epsilon, s}$ evaluated at $y - s(x)$.

The Fisher information corresponding to this mechanism can be numerically calculated by estimating the Fisher infrormation of a single $Y_{i}$ via Monte Carlo as in Algorithm \ref{alg: Monte Carlo estimation of FIM for sequential release}. The algorithm requires that the probability density (mass) function of $X_{i}$ is differentiable w.r.t.\ $\theta$.

\begin{example}[Binary responses] \label{ex: Binary responses}
Let $X_{i} \in \{ 0, 1\}$ with $X_{i} \overset{\text{iid}}{\sim} \text{Bern}(\theta)$ for $i = 1, \ldots, n$. In a non-private setting, a natural estimator for $\theta$ is $\bar{X}$. Instead, we consider estimating $\theta$ privately. We will compare three mechanisms.
\begin{enumerate}
\item It is well known, and can be easily verified that releasing the randomized binary responses $Y_{1}, \ldots, Y_{n}$, where 
\[
Y_{i} = \begin{cases} X & \text{with probability} \frac{e^{\epsilon}}{e^{\epsilon} + 1} \\
1 - X_{i} & \text{ else}  \end{cases}
\]
provides $\epsilon$-DP. The probability of the randomized response being $1$ is given by 
\[
\tau := \mathbb{P}(Y = 1) = \frac{\theta e^{\epsilon} + (1 - \theta)}{1 + e^{\epsilon}}.
\]
The probability density of $Y$ given $\theta$ and $\epsilon$ is
 \[
\log p(y \vert \theta) = y \ln \tau + (1- y) \ln \tau.
\]
Therefore, letting $\alpha = (e^{\epsilon}-1)/(e^{\epsilon} + 1)$, the Fisher information of $Y_{1}, \ldots, Y_{n}$ is given by 
\[
F_{1}(\theta) = n \mathbb{E}\left [-\frac{\partial^{2} \log p(Y \vert \theta)}{\partial \theta^{2}} \right] 
= \frac{n\alpha^{2}}{\tau (1 - \tau)}.
\]

\item One alternative to the above is to release $Z_{i} = X_{i} + V_{i}$, with $V_{i} \overset{\text{i.i.d}}{\sim} \mathcal{N}(0, 1/\epsilon^{2})$.  It is obvious that $Z_{1}, \ldots, Z_{n}$ is as informative as $\hat{\theta}_{2} = \bar{Z}$, which approximately has the normal distribution $\mathcal{N}(\theta, \theta( 1- \theta)/n + 1/(\epsilon^{2} n))$ hence its Fisher information is approximately 
\[
F_{2}(\theta) = \frac{n (\theta(1 - \theta) + 1/\epsilon^{2}) + (1 - 2\theta)^{2}}{[\theta(1 - \theta) + 1/\epsilon^{2}]^{2}}.
\]
\item Finally, we consider adding noise to the mean $\bar{X}$ and obtain $\hat{\theta}_{3} = \bar{X} + V$, where $V \sim \mathcal{N}(0, 1/(n^{2}\epsilon^{2}))$. Therefore, $\hat{\theta}_{3} \sim \mathcal{N}(\theta, \theta( 1- \theta)/n + 1/(\epsilon^{2} n^{2}))$. This last estimator is based on a noisy average whose Fisher information is
\[
F_{3}(\theta) = \frac{n (\theta(1 - \theta) + 1/(\epsilon^{2}n)) + (1 - 2\theta)^{2}}{[\theta(1 - \theta) + 1/(\epsilon^{2} n)]^{2}}.
\]
Notice the improvement due to adding noise to the average (output) rather than averaging noisy inputs.
\end{enumerate}
Figure \ref{fig: Bernoulli example} shows a comparison between $F_{1}(\theta)$ and $F_{2}(\theta)$ as well as between $F_{1}(\theta)$ and $F_{3}(\theta)$ for $n = 100$. It can be seen that, for small values of $\epsilon$, revealing the average of the randomizing the responses is better than revealing the average of the noisy responses created by the Gaussian mechanism. However, in the same $\epsilon$ regimes, using the noisy average is better than the average of the randomised responses.

\begin{figure}[ht]
\centerline{\includegraphics[scale = 0.8]{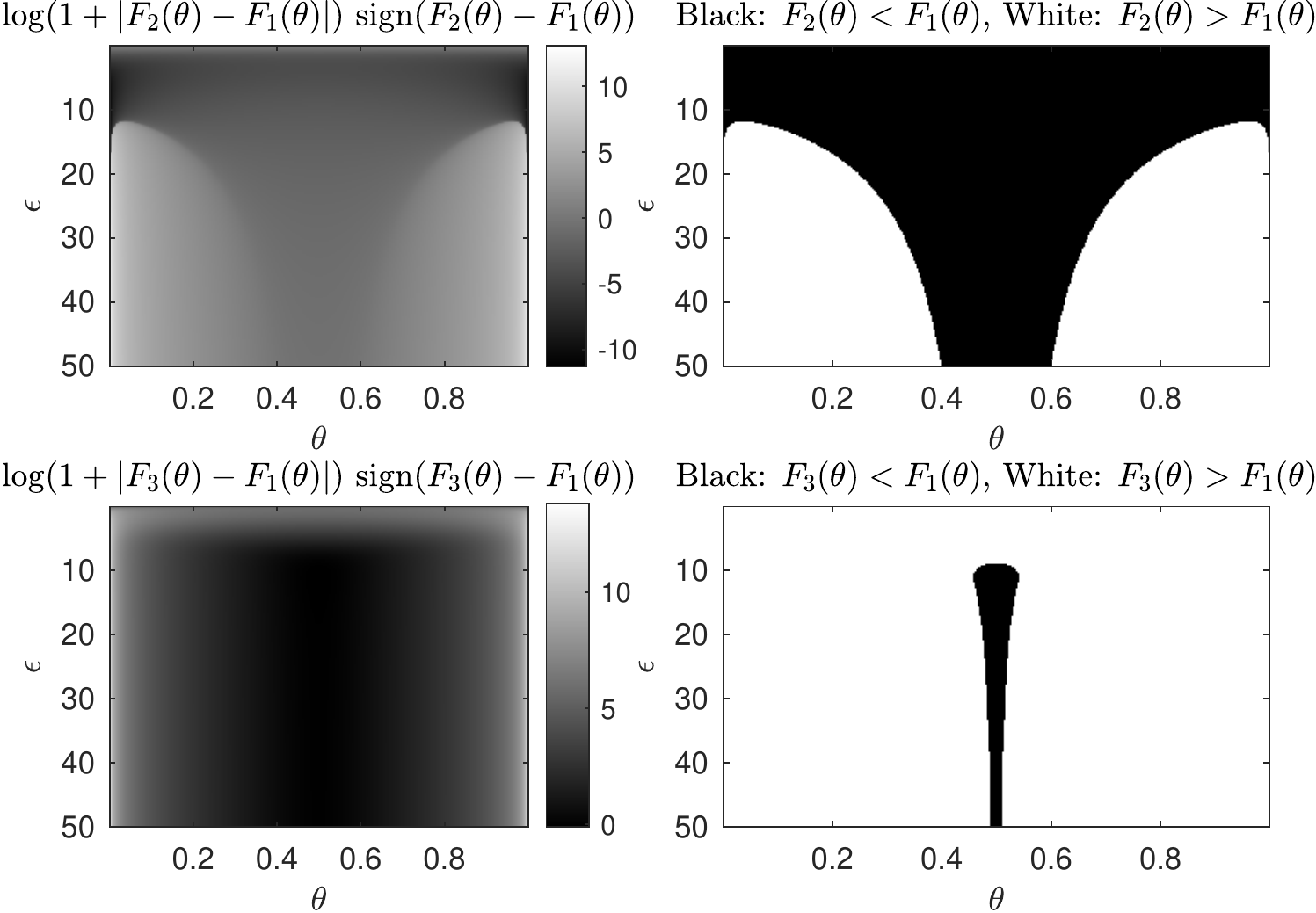} }
\caption{Comparison among $F_{1}(\theta)$, $F_{2}(\theta)$, $F_{3}(\theta)$.}
\label{fig: Bernoulli example}
\end{figure}
\end{example}

\paragraph{Graphical summary:} Figure \ref{fig: FIM graph matching} shows graphical representations of the models respected by the $F(\theta)$ calculations in this section. Note that the graphs (1-3) correspond to the same batch model in \eqref{eq: noisy statistic}, but represented with different sets of variables, while graph (4) corresponds to the model with sequential release in \eqref{eq: sequential release}. Moreover, the graphs (1-4) correspond to the MCMC algorithms in Sections \ref{sec: MH for additive statistic and Gaussian noise}-\ref{sec: Exact inference based on sequential releases}, respectively.

\begin{figure}[h]
\centerline{\includegraphics[scale = 0.9]{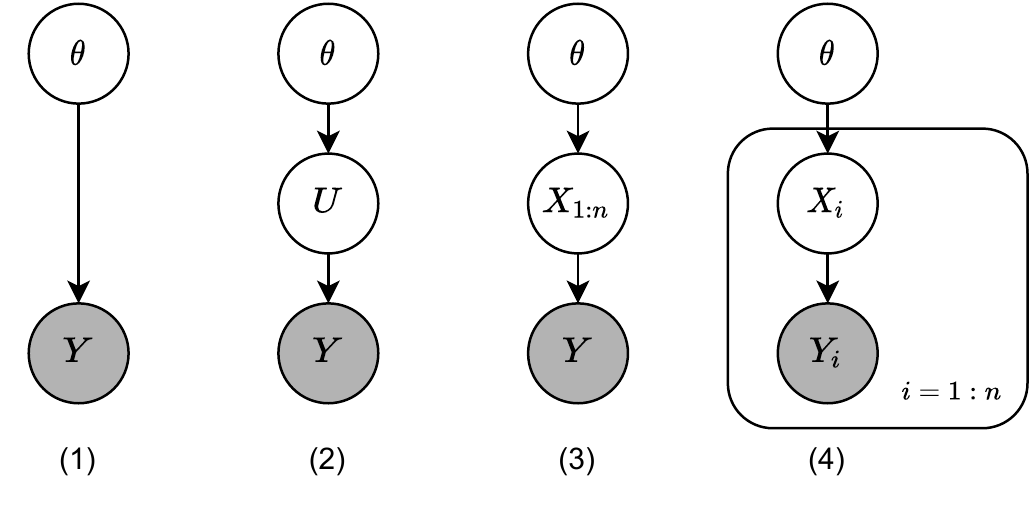}}
\caption{Graphical representations of the models respected in the $F(\theta)$ calculations and MCMC algorithms. Models (1-4) refer to Sections \ref{sec: Fisher information with additive statistic and Gaussian noise}-\ref{sec: Fisher information with sequential release}, respectively, and correspond to the MCMC algorithms in Sections \ref{sec: MH for additive statistic and Gaussian noise}-\ref{sec: Exact inference based on sequential releases}, respectively. Shaded variables are observed; the others are latent.}
\label{fig: FIM graph matching}
\end{figure}

\section{Bayesian inference using MCMC} \label{sec: Bayesian inference using MCMC}
\begin{table}[t]
\caption{Algorithm-model matching for differentially private Bayesian learning via MCMC}
\label{tbl: Algorithm-model matching}
\centerline{
\begin{tabular}{c c}
\toprule
\textbf{Model} & \textbf{Algorithm}  \\
\midrule
Additive statistic, Gaussian noise & Algorithm \ref{alg: MH for ABC-DP} \\
Additive statistic, non-gaussian noise & Algorithm \ref{alg: PMMH for ABC-DP}, \ref{alg: MHAAR for ABC-DP} \\
Non-additive statistic & Algorithm \ref{alg: MHAAR for ABC-DP with z} \\
Sequential release & Algorithm \ref{alg: MHAAR for ABC-DP sequential} \\
\hline
\end{tabular}
}
\end{table}

For the statistic selection method to be useful in practice, it should be accompanied with an inference method. Within the Bayesian framework, one could use ideas from approximate Bayesian computation. This relation is already observed in \citet{Gong_2019}, where an EM algorithm is presented for maximum likelihood estimation of $\theta$. The EM algorithm can be somewhat restrictive, for its E- and M- steps may require exact posterior expectations. An alternative to EM is to consider Bayesian estimation by means of sampling from the posterior distribution of $\theta$ given the shared statistics. Owing to the availability of Monte Carlo techniques for sampling from various forms of posterior distributions, Bayesian estimation is usually less demanding about the nature of the model in hand.

In the batch setting, where a statistic $S_{n}(X_{1:n})$ is shared in noise as in \eqref{eq: noisy statistic}, the posterior distribution is
\begin{equation} \label{eq: posterior distribution batch}
p_{\epsilon, S_{n}}(\theta \vert y) \propto \eta(\theta) p_{\epsilon, S_{n}}(y \vert \theta),
\end{equation}
where $\eta(\theta)$ is the probability density of the prior distribution of $\theta$ and the likelihood $p_{\epsilon, S_{n}}(y \vert \theta)$ is defined in \eqref{eq: marginal distribution of y}.
In the sequential setting, where a function $s$ of each $X_{i}$ is shared in noise, the posterior distribution becomes
\begin{equation} \label{eq: posterior distribution sequential}
p_{\epsilon, s}(\theta \vert y_{1:n}) \propto \eta(\theta) \prod_{t = 1}^{n} p_{\epsilon, s}(y_{t} \vert \theta).
\end{equation}
In the following, we propose MCMC algorithms for sampling from the posterior distribution for the settings investigated separately in the subsections of Section \ref{sec: Statistic selection based on Fisher information}. 

MCMC is the name for a family of methods that (approximately) sample from a given probability distribution, say $\pi(\theta)$. An MCMC algorithm is specified by an ergodic Markov chain $\{ \theta_{i}\}_{i \geq 0}$ which is designed to have $\pi(\theta)$ as its invariant distribution. In that way, the generated sequence $\{ \theta_{i}\}_{i \geq 0}$ from this Markov chain converges in distribution to $\pi(\theta)$. The methods we propose in this paper are either variants or sophisticated imitations of the Metropolis-Hastings (MH), arguably the most popular MCMC algorithm. One iteration of the MH algorithm involves (i) a proposal mechanism where, given the current value $\theta_{i} = \theta$, a candidate value $\theta'$ is proposed from the conditional distribution $q(\theta' \vert \theta)$, and (ii) an accept-reject mechanism in which the proposal is accepted and $\theta_{i+1} = \theta'$ is taken with the acceptance probability 
\[
\alpha(\theta, \theta') = \min \left\{ 1, \frac{q(\theta \vert \theta')}{q(\theta' \vert \theta)} \frac{\pi(\theta')}{\pi(\theta)} \right\};
\]
and otherwise it is rejected and the new sample is taken as the current sample, i.e., $\theta_{i+1} = \theta$. 

In the following subsections, we will propose MH-based algorithms that are suitable for each data-sharing model investigated in Sections \ref{sec: Fisher information with additive statistic and Gaussian noise}-\ref{sec: Fisher information with sequential release}.

\subsection{MH for additive statistic and Gaussian noise} \label{sec: MH for additive statistic and Gaussian noise}

Recall that when $S(X_{1:n})$ is an additive statistic as in \eqref{eq: additive statistic} and the noise is Gussian, the marginal likelihood of $Y$ given $\theta$ can be approximated as \eqref{eq: FIM for normal approximation}. We can use that approximation to obtain 
\begin{align}
\hat{p}_{\epsilon, S_{n}}(\theta \vert y) \propto \eta(\theta) \mathcal{N}(y;  \mu_{s}(\theta), H_{s, \epsilon, n}(\theta)). \label{eq: approximate posterior with gaussian noise}
\end{align}
If this distribution is intractable, MCMC can be used to sample from it. Algorithm \ref{alg: MH for ABC-DP} presents the MH algorithm for this distribution.

\begin{algorithm}
\caption{MH for \eqref{eq: approximate posterior with gaussian noise} - one iteration}
\label{alg: MH for ABC-DP}
\KwIn{Current value: $\theta$; privately shared statistic: $y$, privacy level: $\epsilon$}
\KwOut{New sample}
Propose $\theta' \sim q(\cdot \vert \theta)$\\
Accept the proposal and return $\theta'$ with probability 
\[
\min \left\{ 1,  \frac{q(\theta \vert \theta')}{q(\theta' \vert \theta)} \frac{\eta(\theta')}{\eta(\theta)} \frac{\mathcal{N}(y; \mu_{s}(\theta'), H_{s, \epsilon, n}(\theta'))}{\mathcal{N}(y; \mu_{s}(\theta), H_{s, \epsilon, n}(\theta))} \right\};
\]
otherwise reject the proposal and return $\theta$.
\end{algorithm}

\subsection{MH for additive statistic and non-gaussian noise} \label{sec: MH for additive statistic and non-gaussian noise}

Here we study inference for the setting discussed in Section \ref{sec: Fisher information with additive statistic and non-gaussian noise}, where $S_{n}(X_{1:n})$ is still additive as in \eqref{eq: additive statistic}, however a non-Gaussian mechanism (such as the Laplace mechanism) is used to preserve privacy.

Due to the additivity of $S_{n}$, we can still use the normality approximation in \eqref{eq: normal approximation for additive statistic} for $U = S_{n}(X_{1:n})$. However, due to non-gaussianity of the noise, the marginal distribution of the shared statistic $Y$ may not reliably be approximated as a normal distribution any more. 

In this model, inference can still be made via suitable MCMC algorithms. Define the joint posterior distribution
\begin{equation} \label{eq: joint dist normal approx non-gaussian noise}
\pi(\theta, u \vert y) \propto \eta(\theta) f_{S_{n}}(u \vert \theta) g_{\epsilon, S_{n}}(y \vert u)  
\end{equation}
where, recall that, $g_{\epsilon, n}(y \vert u)$ is the conditional distribution of $y$ given $u = S_{n}(x_{1:n})$. We consider sampling from this posterior distribution by using MCMC. Note that the marginal distribution of $\theta$ with respect to $\pi(\theta, u \vert y)$ can be shown to be $p_{\epsilon, S_{n}}(\theta \vert y)$ in \eqref{eq: posterior distribution batch}, which validates sampling from $\pi(\theta, u \vert y)$ as a means of sampling from $p_{\epsilon, S_{n}}(\theta \vert y)$. 

There are many possible ways to design a correct MCMC algorithm for $\pi(\theta, u \vert y)$. A standard option is to use the MH-within-Gibbs algorithm, where one iteration consists of an update of $u$ conditional on $\theta, y$ which is followed by an update of $\theta$ conditional on $u, y$. The MH-within-Gibbs algorithm may not be efficient in the presence of high dependence between the variables $\theta$ and $u$ given $y$. 

Alternative to MH-within-Gibbs, exact-approximate MCMC algorithms \citep{andrieu_and_roberts_2009, andrieu_et_al_2010, andrieu_et_al_2020} mimic the MH algorithm for the marginal posterior distribution in \eqref{eq: posterior distribution batch}. The term ``exact-approximate'' comes from the fact that the Markov chains of those algorithms still correctly converge to the exact posterior distribution (hence ``exact'') and they are approximations of the ideal (but intractable) MH algorithm for the marginal posterior distribution $p_{\epsilon, S_{n}}(\theta \vert y)$ (hence ``approximate''). Those algorithms can be useful since they circumvent the problem of dependency between $\theta$ and $u$ by relying on sample-based estimators of the marginal MH acceptance ratio. The variance of the estimator reduces with amount of computation. Moreover, the amount of computation can be mostly parallelized. 

In the following we present two exact-approximate MCMC algorithms. 
\subsubsection{Pseudo-Marginal MH}
 The pseudo-marginal MH (PMMH) of \citet{andrieu_and_roberts_2009}, adopted to the posterior distribution in \eqref{eq: posterior distribution batch} is described in Algorithm \ref{alg: PMMH for ABC-DP}. The PMMH algorithm targets the posterior distribution in \eqref{eq: joint dist normal approx non-gaussian noise}, but it mimics the MH algorithm by estimating the intractable marginal likelihood \eqref{eq: marginal distribution of y}  in \eqref{eq: posterior distribution batch} using importance sampling. Observe that the computational cost of one iteration of this algorithm is $\mathcal{O}(N)$, the sample size of the importance sampling step, which can largely be parallelised. 

\begin{algorithm}[ht]
\caption{PMMH for the posterior distribution in \eqref{eq: joint dist normal approx non-gaussian noise} - one iteration}
\label{alg: PMMH for ABC-DP}
\KwIn{Current sample: $(\theta, \hat{Z})$, number of proposals for $u$: $N$ privately shared statistic $y$}
\KwOut{New sample}
Propose $\theta' \sim q(\cdot \vert \theta)$

Sample $u^{(j)} \sim q_{\theta'}(\cdot)$ for $j = 1, \ldots, N$.

Calculate $\hat{Z}' = \frac{1}{N} \sum_{j = 1}^{N} f_{S_{n}}(u^{(j)} \vert \theta) g_{\epsilon, n}(y \vert u^{(j)}) / q_{\theta'}(u^{(j)})$ using \eqref{eq: normal approximation of f}.

Return $(\theta', \hat{Z}')$ with probability
\[
\min \left\{ 1, \frac{q(\theta \vert \theta')}{q(\theta' \vert \theta)} \frac{\eta(\theta')}{\eta(\theta)}  \frac{\hat{Z}'}{\hat{Z} }\right\};
\]
otherwise, reject and return $(\theta, \hat{Z})$.
\end{algorithm}

\subsubsection{MH with Averaged Acceptance Ratios} 
In PMMH, the estimate $\hat{Z}$ in the denominator of the acceptance ratio is carried over from the previous iteration, which can lead to stickiness in its Markov chain. The correlated pseudo-marginal algorithm of \citet{Deligiannidis_et_al_2018} partially alleviates the stickiness problem by making the numerator and denominator correlated, which is achieved by employing a common source of randomness in the estimators of the the numerator and denominator of the marginal acceptance ratio. This idea of using correlated estimators is taken to its limit by a more recent class of exact-approximate MCMC algorithms proposed in \citet{andrieu_et_al_2020} with the name ``MH with Averaged Acceptance Ratio (MHAAR)''. Unlike PMMH or its correlated version, in MHAAR both the numerator and the denominator of the acceptance ratio estimator are (almost) fully refreshed in every iteration, which is one advantage of MHAAR over PMMH.

While there are several versions of MHAAR which can be applied to the posterior distribution in \eqref{eq: joint dist normal approx non-gaussian noise}, we present a particular variant in \citet[Section 3]{andrieu_et_al_2020} in Algorithm \ref{alg: MHAAR for ABC-DP}. The requirement in Algorithm \ref{alg: MHAAR for ABC-DP} to work properly, the proposal distribution for $u$ has to satisfy $q_{\theta, \theta'}(u) = q_{\theta', \theta}(u)$ for all $\theta, \theta'$ and $u$.

\begin{algorithm}[ht]
\caption{MHAAR for the posterior distribution in \eqref{eq: joint dist normal approx non-gaussian noise} - one iteration}
\label{alg: MHAAR for ABC-DP}
\KwIn{Current value: $(\theta, u)$; number of proposals for $u$: $N$; privately shared statistic: $y$}
\KwOut{New sample}
Propose $\theta' \sim q(\cdot \vert \theta)$

\For{$j = 1, \ldots, N$}{
If $j = 1$ set $u^{(1)} = u$; otherwise sample $u^{(j)} \sim q_{\theta, \theta'}(\cdot)$.

Using \eqref{eq: normal approximation of f}, calculate 
\[
w_{j} = \frac{ f_{S_{n}}(u^{(j)} \vert \theta)  g_{\epsilon, n}(y \vert u^{(j)})}{ q_{\theta, \theta'}(u^{(j)})}, \quad 
w'_{j} = \frac{f_{S_{n}}(u^{(j)} \vert \theta')  g_{\epsilon, n}(y \vert u^{(j)})}{ q_{\theta, \theta'}(u^{(j)})}
\]
}
With probability 
\[
\min \left\{ 1,  \frac{q(\theta \vert \theta')}{q(\theta' \vert \theta)} \frac{\eta(\theta')}{\eta(\theta)}  \frac{ \sum_{j = 1}^{N} w'_{j}}{\sum_{j = 1}^{N} w_{j} }\right\},
\]
sample $k \in \{1, \ldots, N \}$ with probability proportional to $w'_{k}$ and return $(\theta', u^{(k)})$. Otherwise, reject the move, sample $k \in \{1, \ldots, N \}$ with probability proportional to $w_{k}$, and return $(\theta, u^{(k)})$.
\end{algorithm}

\subsection{Exact inference based on the true posterior} \label{sec: Exact inference based on the true posterior}

The algorithms in the previous sections can be restrictive, since $\mu_{s}(\theta)$ and $\Sigma_{s}(\theta)$ may not be analytically available, or the normality approximation may not be valid if $S_{n}$ is an extreme statistic of $X_{1:n}$, such as $S(x_{1:n}) = \max_{1 \leq t \leq n} s(x_{t})$. In such models, the true posterior of $\theta$ may be targeted via a special variable augmentation. Consider, for example, the extended posterior distribution
\[
\pi(\theta, x_{1:n} \vert y) \propto \eta(\theta) p(x_{1:n} \vert \theta) p_{\epsilon, S_{n}}(y \vert x_{1:n}).
\]
(Alternatively, one may choose to augment the space with the statistic $u = S(x_{1:n})$ and work with \eqref{eq: joint dist normal approx non-gaussian noise} if $f_{S_{n}}(u \vert \theta)$ can be calculated exactly. However we do not pursue this option to avoid diverting from the main point.)

One can go to an even lower-level representation and express the joint distribution in terms of the random variables that generate $X_{i}$'s and have distributions that do not depend on $\theta$. Letting those random variables $Z_{i} \sim \mu(\cdot)$, assume a transformation $\varphi_{\theta}(\cdot)$  such that if 
\begin{equation} \label{eq: pseudorandom variables}
Z_{i} \overset{\text{i.i.d}}{\sim} \mu(\cdot) \Rightarrow X_{i} = \varphi_{\theta}(Z_{i}) \overset{\text{i.i.d}}{\sim} \mathcal{P}_{\theta}, \quad i \geq 1
\end{equation}
Without loss of generality, $Z_{1:n}$ can be thought of a sequence of random variables from $\text{Unif}(0, 1)$, owing to the role of uniformly distributed pseudo-random variables in generation of random variables from any distribution via some suitable transformation.

Presence of such $Z_{1:n}$ induces the joint posterior distribution
\begin{equation}\label{eq: true posterior}
\pi(\theta, z_{1:n} \vert y) \propto \eta(\theta) \prod_{t = 1}^{n} \mu(z_{t}) h_{\epsilon, S_{n}}(y \vert z_{1:n}, \theta),
\end{equation}
where $h_{\epsilon, S_{n}}(y \vert z_{1:n}, \theta) = p_{\epsilon, S_{n}}(y \vert x_{1:n})$, with $x_{i} = \varphi(x_{i})$, is a re-parametrization of the conditional density in terms of $z_{1:n}$. Crucially, it can be shown that the marginal distribution for $\theta$ with respect to $\pi(\theta, z_{1:n} \vert y)$ is the target posterior distribution $p_{\epsilon, S_{n}}(\theta \vert y)$.

Choosing $Z_{1:n}$ such that its density does not depend on $\theta$ enables the MHAAR methodology of \citet{andrieu_et_al_2020}, where estimates of the acceptance ratio can efficiently be averaged to reduce variance. Algorithm \ref{alg: MHAAR for ABC-DP with z} is inspired from \citet{andrieu_et_al_2020} and can be thought of a variant of MHAAR. However, it bears methodological novelty in the sense that, if desired, only a subset of the latent variables $z_{1:n}$ can be updated per iteration instead of the whole $z_{1:n}$ (which may be computationally cheap in some cases.) Hence, Algorithm \ref{alg: MHAAR for ABC-DP with z} requires to be proven for its validity. We establish that in the following proposition. A proof is given in Appendix \ref{sec: Proof of Proposition}.
\begin{proposition} \label{prop: MHAAR variant invariance}
The Markov kernel of Algorithm \ref{alg: MHAAR for ABC-DP with z} has $\pi(\theta, z_{1:n} \vert y)$ in \eqref{eq: true posterior} as its invariant distribution.
\end{proposition}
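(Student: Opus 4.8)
The plan is to prove invariance through the auxiliary-variable (extended-target) construction that underlies the MHAAR framework of \citet{andrieu_et_al_2020}, rather than by attempting to verify detailed balance for the marginal kernel on $(\theta, z_{1:n})$ directly. Concretely, I would introduce an extended distribution $\bar{\pi}$ on the enlarged space comprising $(\theta, z_{1:n})$ together with the $N-1$ fresh copies generated for the refreshed block and the selection index $k$, show that one iteration of Algorithm \ref{alg: MHAAR for ABC-DP with z} is invariant (indeed reversible) with respect to $\bar{\pi}$, and finally check that the $(\theta, z_{1:n})$-marginal of $\bar{\pi}$ is exactly $\pi(\theta, z_{1:n} \vert y)$ of \eqref{eq: true posterior}. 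Invariance of the stated kernel then follows immediately by marginalizing out the auxiliary variables.

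First I would fix notation for the partial update. Let $\mathcal{I} \subseteq \{1, \ldots, n\}$ be the block of indices refreshed in the iteration, and write $z_{\mathcal{I}}, z_{-\mathcal{I}}$ for the corresponding sub-vectors, so that the $N$ enumerated proposals differ only in their $\mathcal{I}$-components while $z_{-\mathcal{I}}$ is frozen and one proposal coincides with the current $z_{\mathcal{I}}$. The weights $w_j, w'_j$ are then the ratios of the integrand of \eqref{eq: true posterior}, evaluated at $(\theta, z_{\mathcal{I}}^{(j)}, z_{-\mathcal{I}})$ and at $(\theta', z_{\mathcal{I}}^{(j)}, z_{-\mathcal{I}})$ respectively, to the proposal density for the refreshed block. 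Because of the coupling through $h_{\epsilon, S_{n}}(y \vert z_{1:n}, \theta)$, each weight depends on the frozen block $z_{-\mathcal{I}}$, which must be carried through the bookkeeping. With this notation I would write $\bar{\pi}$ as the product of $\pi$ at the ``active'' configuration and the proposal densities of the remaining $N-1$ copies, symmetrized over the index labelling exactly as in \citet{andrieu_et_al_2020}.

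The core of the argument is the algebraic verification that the composite move --- propose $\theta'$, form the averaged ratio $\sum_j w'_j / \sum_j w_j$, accept or reject, and then draw $k$ with probability proportional to $w'_k$ on acceptance or to $w_k$ on rejection --- leaves $\bar{\pi}$ invariant. After cancellation this reduces to the standard MHAAR identity, provided one confirms that (i) the symmetry requirement on the proposal for the refreshed block, the analogue of the condition $q_{\theta, \theta'}(u) = q_{\theta', \theta}(u)$ imposed on Algorithm \ref{alg: MHAAR for ABC-DP}, holds here, and (ii) the factors arising from the frozen block $z_{-\mathcal{I}}$ enter numerator and denominator identically and therefore cancel. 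I would organize this by conditioning on $z_{-\mathcal{I}}$ and arguing that, given the frozen coordinates, the move is an instance of the established MHAAR kernel applied to the conditional target $\pi(\theta, z_{\mathcal{I}} \vert z_{-\mathcal{I}}, y)$; the block-wise (partial) update is precisely the point of departure from \citet{andrieu_et_al_2020} and hence the part that genuinely requires proof.

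I expect the main obstacle to be the simultaneous update of $\theta$ and the block $z_{\mathcal{I}}$ by the averaged-ratio mechanism while $z_{-\mathcal{I}}$ couples them through $h_{\epsilon, S_{n}}$: one must ensure that the resampling step is matched to the correct extended target so that the frozen coordinates cancel cleanly and the retained $\mathcal{I}$-component is distributed according to the right conditional. Verifying this consistency term by term --- checking reversibility of the extended kernel and confirming that integrating out the $N-1$ auxiliary copies together with the index $k$ returns $\pi$ --- is the delicate part of the calculation; once it is in place, the claimed invariance is immediate.
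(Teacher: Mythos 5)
Your proposal follows essentially the same route as the paper's proof: an extended target over $(\theta, \theta', z^{(1:N)}, k)$ whose $(\theta, z^{(1)})$-marginal is $\pi(\theta, z_{1:n} \vert y)$, a swap-type proposal $\theta \leftrightarrow \theta'$, $z^{(1)} \leftrightarrow z^{(k)}$ whose acceptance ratio collapses to the averaged ratio once the refresh distribution for the copies is exchangeable across copy labels (the paper states this as exchangeability of $\mu(z^{(1)})\prod_{i \geq 2} R_{b}(z^{(i)} \vert z^{(1)})$, which is the precise form of your symmetry condition), and a final mixture over the randomly chosen subset $b$ with weights $\xi(b)$. The only cosmetic difference is that you phrase the partial update as MHAAR applied to the conditional target given the frozen block, whereas the paper keeps the frozen coordinates inside the extended target $\pi_{b}$ and lets them cancel in the ratio; the two formulations are equivalent.
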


\begin{algorithm}
\caption{MHAAR for \eqref{eq: true posterior} - one iteration}
\label{alg: MHAAR for ABC-DP with z}
\KwIn{Current sample: $(\theta, z_{1:n})$, subset size: $m < n$, number of samples for $z_{1:n}$: $N$, privately shared statistic: $y$.}
\KwOut{New sample}
Propose $\theta' \sim q(\cdot \vert \theta)$

\If{full mode}{
Set $z_{1:n}^{(1)} = z_{1:n}$ and propose $z_{1:n}^{(2)}, \ldots, z_{1:n}^{(N)} \sim \mu(\cdot)$.}
\Else{
Set $z_{1:n}^{(1)} = z_{1:n}$.

Sample (without replacement) a subset $b = \{ b_{1}, \ldots, b_{m} \} \subset \{1, \ldots, n\}$ uniformly.

\For{$i = 2, \ldots, N$}{
Set $z_{/b}^{(i)} = z_{/b}$, propose $z_{b}^{(i)} \sim \prod_{i = 1}^{m} \mu_{b_{i}}(\cdot)$, and set  $z_{1:n}^{(i)} = (z_{b}^{(i)}, z_{/b})$.}
}
Sample $k$ with probability proportional to $h_{\epsilon}(y \vert z_{1:n}^{(k)}, \theta')$. 

Accept $\theta', z_{1:n}^{(k)}$ as the new sample with probability 
\[
\min \left\{ 1,  \frac{q(\theta \vert \theta')}{q(\theta' \vert \theta)} \frac{\eta(\theta')}{\eta(\theta)} \frac{\sum_{i = 1}^{N} h_{\epsilon}(y \vert z_{1:n}^{(i)}, \theta')}{\sum_{i = 1}^{N} h_{\epsilon}(y \vert z_{1:n}^{(i)}, \theta)} \right\};
\]
otherwise reject and repeat $(\theta, z_{1:n})$ as the new value.

\end{algorithm}

%%%%%%%%%%%%%%%%%%%%%%%%%%%%%%%%%%%%%%%%%%%%%%%%%%%%%%%%%%%%%%%%%%%%%%%%%%%%%%%
%%%%%%%%%%%%%%%%%%%%%%%%%%%%%%%%%%%%%%%%%%%%%%%%%%%%%%%%%%%%%%%%%%%%%%%%%%%%%%%

\subsection{Exact inference based on sequential releases} \label{sec: Exact inference based on sequential releases}

Here we consider the scenario in Section \ref{sec: Fisher information with sequential release}, where the individuals' data are obtained in privacy preserving noise as in \eqref{eq: sequential release}. Here we have independent sequential observations $Y_{i}$, whose generative model involves a latent variable, $X_{i}$. As in Section \ref{sec: Exact inference based on the true posterior}, we will adopt the representation of the generative model in terms of random variables $Z_{i}$ whose distribution does not depend on $\theta$. Specifically, we assume that \eqref{eq: pseudorandom variables} holds for some distribution $\mu(\cdot)$ and functions $\varphi_{\theta}(\cdot)$ for all $\theta$. Again, as long as one can sample from $\mathcal{P}_{\theta}$, existence of such $Z$ is secured. This enables the joint distribution
\begin{equation}\label{eq: true posterior sequential}
\pi(\theta, z_{1:n} \vert y_{1:n}) \propto \eta(\theta) \prod_{t = 1}^{n} \mu(z_{t}) h_{\epsilon}(y_{t} \vert z_{t}, \theta),
\end{equation}
where $h_{\epsilon}(y_{t} \vert z_{t}, \theta) = g_{\epsilon}(y_{t} \vert s(\varphi_{\theta}(z_{t})))$. We aim to sample from the posterior distribution in \eqref{eq: true posterior sequential} using Algorithm \ref{alg: MHAAR for ABC-DP sequential}, which we adopt from a recently developed MHAAR algorithm in \citet{andrieu_et_al_2020}. The reason we choose this particular algorithm is that the variance of its acceptance ratio does not increase with $n$ as long as the proposal distribution for $\theta$ is properly scaled with the data size $n$; see \citet{yildirim_et_al_2018} for a related result. Note that this is in contrast to the PMMH algorithm of \citet{andrieu_and_roberts_2009} whose acceptance rate increases with $n$, leading to higher rejection rates, hence to slowing of the algorithm.

\begin{algorithm}
\caption{MHAAR for \eqref{eq: true posterior sequential} - one iteration}
\label{alg: MHAAR for ABC-DP sequential}
\KwIn{Current sample $(\theta, z_{1:n})$, subset size $m < n$, number of samples  for $z_{1:n}$: $N$, privately shared sequence: $y_{1:n}$.}
\KwOut{New sample}
Propose $\theta' \sim q(\cdot \vert \theta)$

\For{$t = 1,\ldots, n$}{
Set $z_{t}^{(1)} = z_{t}$ and propose $z_{t}^{(2)}, \ldots, z_{t}^{(N)} \sim \mu(\cdot)$.}

Calculate the acceptance probability 
\[
\alpha = \min \left\{ 1,  \frac{q(\theta \vert \theta')}{q(\theta' \vert \theta)} \frac{\eta(\theta')}{\eta(\theta)} \frac{\prod_{t = 1}^{n} \sum_{i = 1}^{N} h_{\epsilon}(y_{t} \vert z_{t}^{(i)}, \theta')}{\prod_{t = 1}^{n} \sum_{i = 1}^{N} h_{\epsilon}(y_{t} \vert z_{t}^{(i)}, \theta)} \right\}.
\]

Sample $v \sim \textup{Unif}(0, 1)$.\\
\If{$v < \alpha$}{
Return $(\theta', z_{1:n} = (z_{1}^{(k_{1})}, \ldots, z_{n}^{(k_{n})}))$, where each $k_{t} \in \{1, \ldots, N\} $ is sampled with probability proportional to $h_{\epsilon}(y_{t} \vert z_{t}^{(k_{t})}, \theta')$.
}
\Else{
Return $(\theta, z_{1:n} = (z_{1}^{(k_{1})}, \ldots, z_{n}^{(k_{n})}))$, where each $k_{t} \in \{1, \ldots, N\} $ is sampled with probability proportional to $h_{\epsilon}(y_{t} \vert z_{t}^{(k_{t})}, \theta)$.
}

\end{algorithm}

\section{Numerical examples} \label{sec: Numerical examples}
In this section we show some numerical examples which justify the proposed way of choosing the statistic of the sensitive data, as well as demonstrate the performance of the MCMC algorithms proposed for the different privacy settings that are described in Section \ref{sec: Statistic selection based on Fisher information}.

For Bayesian inference, a method for statistic selection can be reasonably justified if it yields the statistic that results in smallest MSE for the posterior expectation $\hat{\theta}(Y) = \mathbb{E}(\theta \vert Y)$, that is,
\begin{equation} \label{eq: MSE of posterior estimates}
\textup{MSE} = \mathbb{E}_{Y}[(\hat{\theta}(Y)-\theta^{\ast})^2], %\\
\end{equation}
In our experiments we will follow that way of justification for our statistic selection method based on the Fisher information. For a given $y$, $\hat{\theta}(y)$ will be obtained by one of the MCMC algorithms presented in Section \ref{sec: Bayesian inference using MCMC}, depending on the nature of the data generation model. MSE in \eqref{eq: MSE of posterior estimates} will approximated by $\textup{MSE} \approx \frac{1}{M} \sum_{i = 1}^{M} (\hat{\theta}(Y^{(i)})-\theta^*)^2$, where the $M$ independent samples for $Y^{(i)}$ are drawn conditional on the true value $\theta^{\ast}$. 

\subsection{Comparison of additive statistics with the Gaussian mechanism} \label{sec: Comparison of additive statistics with the Gaussian mechanism}
Our first example refers to the setting in Example \ref{ex: Variance parameter of the normal distribution}, where $\mathcal{P}_{\theta} = \mathcal{N}(0, \theta)$ with the natural limits $[-A, A]$ for the data and we computed $F(\theta)$ when the noisy statistic in \eqref{eq: noisy statistic} is constructed from $s(x) = \vert x\vert$ and $s(x) = x^{2}$, separately. In Example \ref{ex: Variance parameter of the normal distribution} we showed that $s(x) = \vert x\vert$ results in larger $F(\theta)$ than $s(x) = x^{2}$ when $\epsilon = 1$, while $s(x) = x^{2}$ becomes more informative when there is no privacy.

Here we compared the choices $s(x) = \vert x\vert$ and $s(x) = \vert x^{2} \vert$ in terms of MSE at various values of $\epsilon$. We took $A = 10$, $n = 100$, and $\theta^{\ast} = 2$. For MSE calculations, we took $M = 10^{3}$. To obtain the posterior expectations, we ran Algorithm \ref{alg: MH for ABC-DP}, with flat prior on $\theta$, to generate a total of $K = 10^{5}$ iterations and took the sample average after discarding the first quarter as burn-in.

The results are summarized in Figure \ref{fig: Alg1-NormFIM}. We observe that $s(x) = \vert x\vert$ outperforms $s(x) = x^2$ in terms of MSE unless $\epsilon$ is very large. Critically, we observe that when $F(\theta)$ is larger we have smaller MSE, which justifies the use of Fisher information for statistic selection.
\begin{figure}[ht]
\centerline{
\includegraphics[scale = 1.2]{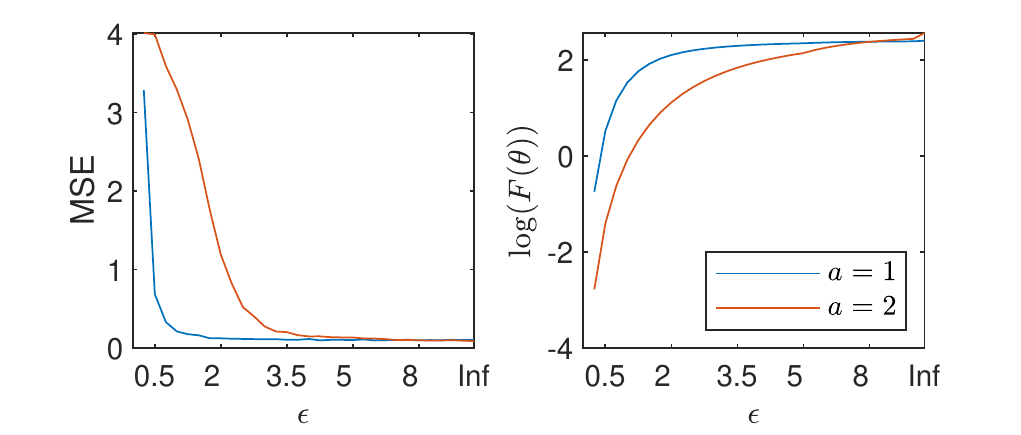} 
}
\caption{MSE for Algorithm \ref{alg: MH for ABC-DP} and (Logarithm of) $F(\theta)$ for different moments when there is Gaussian noise.}
\label{fig: Alg1-NormFIM}
\end{figure}

\subsection{Comparison of additive statistics with the Laplace mechanism} \label{sec: Comparison of additive statistics with the Laplace mechanism}
In this part, we repeat the previous experiment but with the following differences: We consider the Laplace mechanism, where the additive statistic is corrupted by Laplace noise as
\[
Y = \frac{1}{n}\sum_{i=1}^{n} \vert x_{i}\vert^{a} + V, \quad V \sim \text{Laplace}(0, A^{a}/(n\epsilon)), 
\]
For Bayesian inference, we used Algorithm \ref{alg: PMMH for ABC-DP}. Note that one could use Algorithm \ref{alg: MHAAR for ABC-DP} as well, which would yield the same qualitative results in terms of MSE. 

Figure \ref{fig: Laplace_Add-Norm_FIM} shows MSE, obtained with $M = 100$ noisy observations, and $F(\theta)$ for the choices $s(x) = \vert x\vert$  and $s(x) = x^2$. We observe that, like in the case where we use Gaussian mechanism, $s(x) = \vert x\vert$  provides more information than $s(x) = x^{2}$ under the Laplace noise. Moreover, MSE values and $F(\theta)$ are consistent also in this problem.
\begin{figure}[ht]
\centerline{
\includegraphics[scale = 1.2]{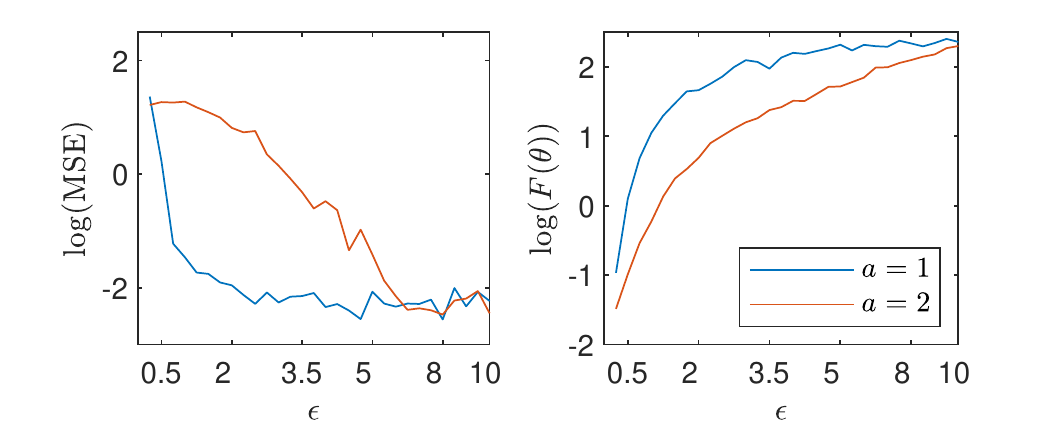}
}
\caption{MSE (left) and $F(\theta)$ (right) for $s(x) = \vert x\vert$ (blue) and $s(x) = x^{2}$ (red), under Laplace mechanism. MSE is calculated from the samples obtained from Algorithm \ref{alg: PMMH for ABC-DP}.}
\label{fig: Laplace_Add-Norm_FIM}
\end{figure}

\subsection{Comparison of Algorithms \ref{alg: PMMH for ABC-DP} and \ref{alg: MHAAR for ABC-DP} in terms of mixing}

 Recall that Algorithms \ref{alg: PMMH for ABC-DP} and \ref{alg: MHAAR for ABC-DP} are instances of PMMH and MHAAR, respectively, that both target the same posterior distribution in Section \ref{sec: MH for additive statistic and non-gaussian noise}. In this part we compare their performances in terms of integrated auto-correlation (IAC) time for $\theta$, which is the asymptotic variance of an average of samples generated by the MCMC algorithm relative to that of the average of i.i.d samples from the target distribution. (Hence, smaller IAC time is preferable.)

We continue with the setting in Section \ref{sec: Comparison of additive statistics with the Laplace mechanism}. We compared the IAC times of the algorithms with $s(x) = \vert x\vert$ and $\epsilon = 5$. 

For Algorithm \ref{alg: PMMH for ABC-DP}, the importance sampling distribution for $u$ was selected as $q_{\theta}(u) = f_{S_{n}}(u \vert \theta)$. For Algorithm \ref{alg: MHAAR for ABC-DP}, we chose the symmetric proposal distribution for $u$ as $q_{\theta, \theta'}(u) = f_{S_{n}}(u \vert (\theta + \theta')/2)$. In both algorithms, we used the same flat prior and the same random walk proposal for $\theta$.

Table \ref{tIAC of Algs} shows for both algorithms the IAC times vs sample size $N$ to estimate the acceptance ratios. As can be seen from Table \ref{tIAC of Algs}, Algorithm \ref{alg: MHAAR for ABC-DP} outperforms Algorithm \ref{alg: PMMH for ABC-DP} with lower IAC values for almost all of the $N$'s, while the performance gap closes as $N$ increases.

\begin{table}[ht]
\centering
\caption{IAC values of Algorithms \ref{alg: PMMH for ABC-DP} and \ref{alg: MHAAR for ABC-DP}}
\label{tIAC of Algs}
\begin{tabular}{ c  c  c }
\toprule
$N$ & Algorithm \ref{alg: PMMH for ABC-DP} & Algorithm \ref{alg: MHAAR for ABC-DP} \\ \midrule
2  & 44.03 & 17.99 \\
5 & 28.19 & 17.10  \\ 
10 & 21.11 & 16.13  \\ 
20 & 18.16 & 15.44  \\ 
50 & 15.32 & 13.78  \\ 
100 & 16.42 & 15.86  \\ 
\hline
\end{tabular}
\end{table}

\subsection{Inference based on a non-additive statistic} \label{sec: Inference based on non-additive statistic}
In this part we demonstrate the use of statistic selection method as well as the inference method when the compared statistics $S_{n}(X_{1:n})$ are non-additive. Specifically, we choose the maximum of $s(x_{i})$'s
\begin{align} \label{eq: max function of s}
S_{n}(X_{1:n}) &= \max \{ s(X_{i}); i = 1, \ldots, n \},
\end{align}
and the median of $s(x_{i})$'s
\begin{align}
S_{n}(X_{1:n}) &= \text{median} \{ s(X_{i}); i = 1, \ldots, n \} \label{eq: median function of s}
\end{align}
as two competitors for the statistic to be shared privately. As discussed in Section \ref{sec: Fisher information for based on the true marginal distribution}, adding noise to the maximum and median based on the global sensitivity is ineffective, because the global sensitivity of the those functions are determined by the range of $s(\cdot)$ irrespective of $n$. Instead, we consider generating the noisy statistic $Y$ by adjusting the amount of noise using the smooth sensitivity of the maximum and median functions. 

The smooth sensitivity formulas for the maximum and median can be found in \citet{Nissim_et_al_2007}, we give them here for the sake of completeness. Let $A_{s} = \max_{x \in \mathcal{X}} s(x)$ and assume $\min_{x \in \mathcal{X}} s(x) = 0$. (Otherwise $s(\cdot)$ can be shifted by a constant so that the minimum of their range is $0$.) Given the function $s(\cdot)$ and $x_{1}, \ldots, x_{n}$, let $s_{1}, \ldots, s_{n}$ be the sorted values of $s(x_{1}), \ldots, s(x_{n})$ so that $0 \leq s_{1} \leq s_{2} \leq \ldots \leq s_{n} \leq A_{s}$. For the maximum in \eqref{eq: max function of s}, the smooth sensitivity is given by
\begin{align*}
&\Delta^{smooth}_{\max, \beta}(x_{1:n}) = \max \{ e^{-k \beta} b_{k}; k = 0, \ldots, n\},
\end{align*}
with $b_{k} = \max\{ A_{s} - s_{n-k}, s_{n} - s_{n-k-1} \}$. For the median in in \eqref{eq: median function of s}, the smooth sensitivity is
\begin{align*}
\Delta^{smooth}_{\text{med}, \beta}(x_{1:n}) = \max \{ e^{-k \beta} b_{k};  k = 0, \ldots, n \}
\end{align*}
with $b_{k} = \max\{ s_{m+i} - s_{m+i-k-1}; i = 0, \ldots, k+1 \}.$

As in the previous examples, we have the same population distribution, $\mathcal{P}_{\theta} = \mathcal{N}(0, \theta)$, and the data generation process limits $X$'s to $[-A, A]$.

We ran Algorithm \ref{alg: MHAAR for ABC-DP with z} with each of the above choices for $S_{n}$ with $s(x) = \vert x\vert$. We took $\theta = 2$, $n=100$, and the differential privacy parameters are taken as $(\epsilon = 5, \delta = 1/n^{2})$. Table \ref{tMedian_max} shows the MSE values obtained with $M = 100$. We also report in Figure \ref{fig: MHAAR for ABC-DP with z FIM} the estimates of $F(\theta)$ for the median and maximum statistics, obtained with Algorithm \ref{alg: Monte Carlo estimation of FIM_non additive}, for various values of $\theta$.

\begin{table}[ht]
\centering
\caption{MSE for median and maximum statistics}
\label{tMedian_max}
\begin{tabular}{ l  c  c }
\toprule
$S_{n}(X_{1:n})$ & MSE  \\
\midrule
median & 0.391 \\
max & 22.64 \\ 
\hline
\end{tabular}
\end{table}

\begin{figure}[ht]
\centerline{
\includegraphics[scale = 1]{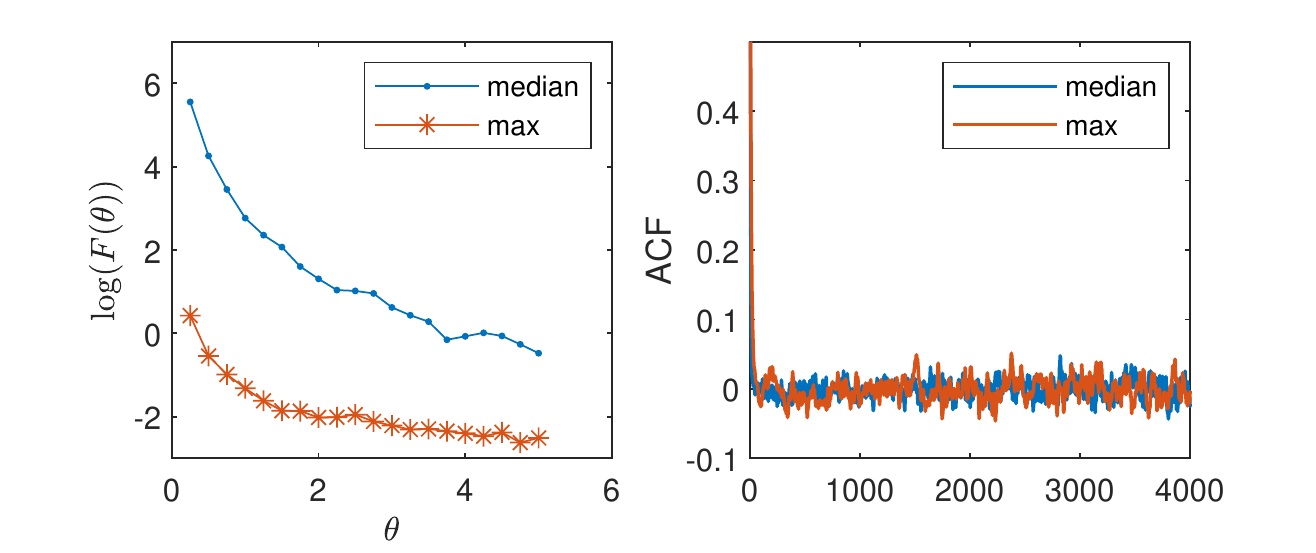}
}
\caption{$F(\theta)$ (left) for median (blue) and maximum (red) of $s(x) = \vert x\vert$, Auto-correlation function (ACF) for Algorithm \ref{alg: MHAAR for ABC-DP with z} for median (blue) and maximum (red). Privacy parameters are $(\epsilon, \delta) = (5, 1/n^{2})$.}
\label{fig: MHAAR for ABC-DP with z FIM}
\end{figure}

By observing $F(\theta)$ values and MSE values in Figure \ref{fig: MHAAR for ABC-DP with z FIM} and Table \ref{tMedian_max}, we can see that empirical results agree with the theoretical expectations. In Figure \ref{fig: MHAAR for ABC-DP with z FIM}, in terms of $F(\theta)$, median has better performance since it is definitely more informative as it can be interpreted from $F(\theta)$ values. Also, MSE values reveal that estimates obtained by using median statistics are closer and variate less from the desired parameter even in the non-additive and non-gaussian case.

Figure \ref{fig: MHAAR for ABC-DP with z FIM} shows the sample auto-correlation function, averaged over $5$ runs each with an independent noisy observation, for the median and maximum for $\vert x_{i} \vert$'s. We observe from the plots that Algorithm \ref{alg: MHAAR for ABC-DP with z} mixes well for both statistics, which suggests that the MSE calculations are reliable.

\subsection{Comparison of statistics in sequential release}
In this part, we utilize Algorithm \ref{alg: MHAAR for ABC-DP sequential} to compare statistics using sequential release. Laplace mechanism and normal posterior distribution with unknown variance is again the target in this case. However, as it was described in Section \ref{sec: Exact inference based on sequential releases}, algorithm aims to draw samples from individual noisy data points instead of summary statistics such as mean or median. 

Comparison of $s(x) = \vert x\vert$ and $s(x) = x^{2}$ are represented in Figure \ref{fig: MSE sequential Release}. We deduce from the figure that that $s(x) = \vert x \vert$ yields smaller MSE, as predicted by $F(\theta)$. 
\begin{figure}[ht]
\centerline{
\includegraphics[scale = 1.2]{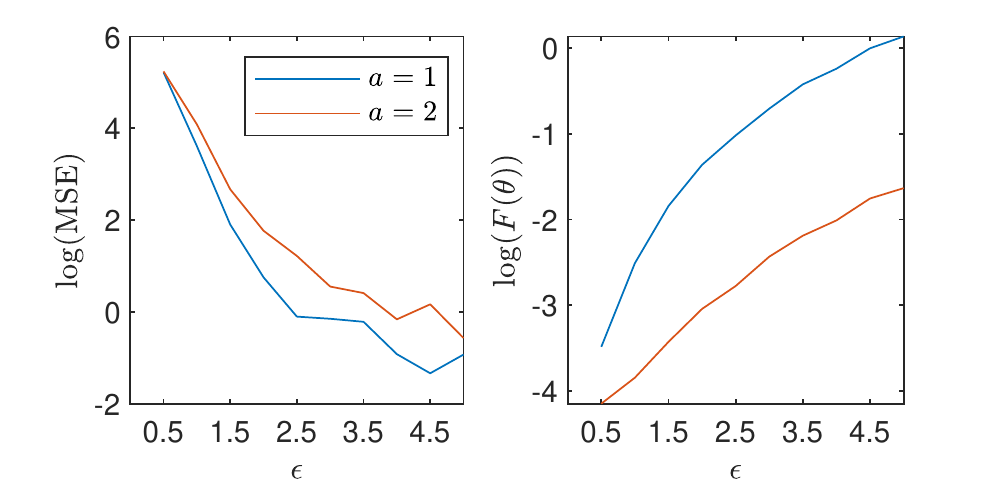}
}
\caption{MSE (left) and $F(\theta)$ (right) for $s(x) = \vert x\vert$ (blue) and $s(x) = x^{2}$ (red), under Laplace mechanism using sequential release. MSE is calculated from the samples obtained from Algorithm \ref{alg: MHAAR for ABC-DP sequential}.}
\label{fig: MSE sequential Release}
\end{figure}

\section{Conclusion} \label{sec: Conclusion}
In this paper, we propose a method for statistic selection for parameter estimation in a data privacy context. The method is based on the Fisher information. When one candidate statistics are not uniformly better than the other in terms of its Fisher information, the prior information for the parameter can be incorporated to make a final decision. To calculate the Fisher information, we propose several Monte Carlo algorithms for various data-sharing scenarios depending on the nature of the statistic and the privatization mechanism. We equip the statistic selection method with suitable MCMC algorithms for Bayesian parameter estimation given the shared (noisy) statistics of data. Our findings showed the usefulness of the statistic selection based on the Fisher information as well as the effectiveness of the proposed MCMC algorithms.

The proposed framework for selecting the statistic to be privately shared is not presented as a competitor of differentially private estimation methods. In principle, the method can be useful and incorporated into any likelihood-based parameter estimation algorithm by providing the most informative statistic among those considered. Bayesian estimation via MCMC is adopted and promoted in this paper not only due to offering incorporation of the prior distribution for $\theta$ but also for the breadth of the models for which it can be applied. However, statistic selection based on Fisher information can also be utilised for differentially private maximum likelihood estimation via EM as in \citet{Gong_2019}. Moreover, the work developed in this paper can be used in the schemes of \citet{Dwork_and_Smith_2010}, where several estimators, obtained from batches, are combined into one private estimator. 

The methodology presented in this paper is not specific to additive mechanisms in differential privacy. Moreover, it also extends to other definitions of privacy. Specifically, a privacy preserving mechanism can be constructed to satisfy a certain privacy level with respect to a privacy definition. The necessary condition for the presented methodology to be applied for that mechanism is the ability to write the conditional distribution of the generated output given the sensitive data.

One limitation of the work arises from the possibility of one Fisher information matrix not being greater than the other (in the sense of the difference being positive definite). In such a case, an alternative overall measure such as the trace of the Fisher information can be considered.

One possible extension of this work is adaptive clipping method in an online estimation setting where individuals' data are entered into the system sequentially and one-by-one. In such a case, each individual data can be received after clipping (so that the sensitivity is shrunk). The range of clipping can be determined in an adaptive way based on the data received so far. Adaptive clipping is already used for differentially private gradient-based algorithms \citep{Pichapati_et_al_2019, Andrew_et_al_2021}. It would be interesting to compare those methods to one that applies clipping to maximize informativeness of the clipped data.

\section*{Supplementary material}

The code to generate the numerical results in this paper can be found at \\
\url{https://github.com/barisalparslan1/Statistic\_Selection\_and\_MCMC}.
\section*{Acknowledgments}
The study was funded by the Scientific and Technological Research Council of Turkey (TÜBİTAK) ARDEB Grant No 120E534.

\begin{appendices}

\section{Proof of Proposition \ref{prop: MHAAR variant invariance}} \label{sec: Proof of Proposition}

\begin{proof}[Proof of Proposition \ref{prop: MHAAR variant invariance}]
We will prove the Proposition for the more general version where a subset of $z_{1:n}$ is updated.

Fix a subset $b \subseteq \{ 1, \ldots, n \}$. Let $z := z_{1:n}$ and $\mu(z) = \prod_{t = 1}^{n} \mu(z_{t})$ for a short-hand notation. Consider the joint distribution
\begin{align*}
&\pi_{b}(\theta, \theta', z^{(1:N)}, k)  :=  \eta(\theta) \mu(z^{(1)}) h(y \vert z^{(1)}, \theta) q(\theta' \vert \theta) \prod_{i = 2}^{N} R_{b}(z^{(i)} \vert z^{(1)}) \frac{h(y \vert z^{(k)}, \theta')}{\sum_{k'} h(y \vert z^{(k')}, \theta')} 
\end{align*}
where $R_{b}(\cdot \vert \cdot)$ is some conditional distribution whose selection will prove critical. 

Finally, let $B$ be the random variable corresponding to the subset $b$ whose probability distribution is denoted by $\xi(b) = \mathbb{P}(B = b)$. Consider the extended distribution 
\[
\pi(\theta, \theta', z^{(1:N)}, k) = \sum_{b \subseteq \{1, \ldots, n\}} \xi(b) \pi_{b}(\theta, \theta', z^{(1:N)}, k)
\]
The important point about $\pi(\theta, \theta', z^{(1:N)}, k)$ is that the marginal probability density of $\theta, z^{(1)}$ is the desired posterior distribution in \eqref{eq: true posterior} evaluated at $\theta, z^{(1)}$ and the rest of the variables are the auxiliary variables to enable a tractable MCMC algorithm. Therefore, one can sample from $\pi(\theta, \theta', z^{(1:N)}, k)$ and consider the components $\theta, z^{(1)}$, in particular the former, as samples from the true posterior distribution.

We show that when $B = b$ is sampled, Algorithm \ref{alg: MHAAR for ABC-DP with z} targets $\pi_{b}(\theta', z^{(2:N)}, k \vert \theta, z^{(1)})$. Its proposal mechanism of  corresponds to sampling $\theta', z^{(2:N)}, k$ from their conditional distribution $\pi_{b}(\theta', z^{(2:N)}, k \vert \theta, z^{(1)})$ and proposing the swapping
\[
\theta \leftrightarrow \theta', \quad z^{(1)} \leftrightarrow z^{(k)}.
\] 
The resulting acceptance ratio is 
\begin{align*}
& \hspace{-2cm} \frac{\pi_{b}(\theta', \theta, z^{(k)}, z^{(1:k-1)},  z^{(k+1:N)}, k) }{\pi_{b}(\theta, \theta', z^{(1)}, \ldots, z^{(N)}, k)} \\
&= \frac{q(\theta \vert \theta') \eta(\theta') \mu(z^{(k)}) h(y \vert z^{(k)}, \theta') }{q(\theta' \vert \theta) \eta(\theta) \mu(z^{(1)}) h(y \vert z^{(1)}, \theta) }  \frac{\prod_{i \neq k}^{N} R_{b}(z^{(i)} \vert z^{(k)}) \frac{h(y \vert z^{(1)}, \theta)}{\sum_{i = 1}^{N} h(y \vert z^{(i)}, \theta)}}{\prod_{i = 2}^{N} R_{b}(z^{(i)} \vert z^{(1)}) \frac{h(y \vert z^{(k)}, \theta')}{\sum_{i = 1}^{N} h(y \vert z^{(i)}, \theta')}} \\
&= \frac{q(\theta \vert \theta') \eta(\theta') \mu(z^{(k)}) h(y \vert z^{(k)}, \theta') }{q(\theta' \vert \theta) \eta(\theta) \mu(z^{(1)}) h(y \vert z^{(1)}, \theta) } \frac{\prod_{i \neq k}^{N} R_{b}(z^{(i)} \vert z^{(k)}) \frac{h(y \vert z^{(1)}, \theta)}{\sum_{i = 1}^{N} h(y \vert z^{(i)}, \theta)}}{\prod_{i = 2}^{N} R_{b}(z^{(i)} \vert z^{(1)}) \frac{h_{\theta'}(y \vert z^{(k)})}{\sum_{i = 1}^{N} h(y \vert z^{(i)}, \theta')}} \\
& = \frac{q(\theta \vert \theta') \eta(\theta') \mu(z^{(k)}) \prod_{i \neq k}^{N} R_{b}(z^{(i)} \vert z^{(k)}) }{q(\theta' \vert \theta) \eta(\theta) \mu(z^{(1)}) \prod_{i = 2}^{N} R_{b}(z^{(i)} \vert z^{(1)}) } \frac{\sum_{i = 1}^{N} h(y \vert z^{(i)}, \theta')}{\sum_{i = 1}^{N} h(y \vert z^{(i)}, \theta)}
\end{align*}
If the distribution $\mu(z^{(1)}) \prod_{i =2 }^{N} R_{b}(z^{(i)} \vert z^{(1)})$ is exchangeable with respect to $z^{(1:N)}$, then the acceptance ratio above simplifies to 
\[
\frac{q(\theta \vert \theta') \eta(\theta')  }{q(\theta' \vert \theta) \eta(\theta)} \frac{\sum_{k' = 1}^{N} h(y \vert z^{(k')}, \theta')}{\sum_{k' = 1}^{N} h(y \vert z^{(k')}, \theta)}.
\]
The proposal mechanism for the $z$ variable in Algorithm \ref{alg: MHAAR for ABC-DP with z}, which corresponds to $R_{b}$ here, satisfies the exchangeability property just mentioned. Hence, conditional on $B = b$, one iteration of Algorithm \ref{alg: MHAAR for ABC-DP with z} targets $\pi_{b}(\theta, \theta', z^{(1:N)}, k)$.

The proof is complete by observing that one iteration of Algorithm \ref{alg: MHAAR for ABC-DP with z} targets a $\pi_{b}(\theta, \theta', z^{(1:N)}, k)$ with probability $\xi(b)$, hence it targets $\pi(\theta, \theta', z^{(1:N)}, k)$.
\end{proof}

\end{appendices}

%%===========================================================================================%%
\bibliographystyle{apalike}
\bibliography{my_refs_arxiv}
% \bibliography{sn-bibliography}% common bib file
%% if required, the content of .bbl file can be included here once bbl is generated
%%\input sn-article.bbl

%% Default %%
%%\input sn-sample-bib.tex%

\end{document}